\newtheorem{theorem}{Theorem} 
\newtheorem*{temp}{Risk control with uniform bounds} 
\newtheorem{prop}{Proposition}
\newtheorem{lemma}{Lemma}
\newtheorem{cor}{Corollary}[section]
\newcommand{\RR}{\mathbb{R}}
\newcommand{\EE}{\mathbb{E}}
\newcommand{\Dcoco}{\mathcal{D}_\textup{COCO}}
\newcommand{\argmin}{\textup{argmin}}
\newcommand{\cY}{\mathcal{Y}}
\newcommand{\cX}{\mathcal{X}}
\newcommand{\cB}{\mathcal{B}}
\newcommand{\cS}{\mathcal{S}}
\newcommand{\cH}{\mathcal{H}}
\newcommand{\cC}{\mathcal{C}}
\newcommand{\cD}{\mathcal{D}}
\newcommand{\cF}{\mathcal{F}}
\newcommand{\cT}{\mathcal{T}}
\newcommand{\cG}{\mathcal{G}}
\newcommand{\cN}{\mathcal{N}}
\newcommand{\PP}{\mathbb{P}}
\newcommand{\cZ}{\mathcal{Z}}
\newcommand{\Lhatp}{{\hat L}^+}
\DeclareMathOperator{\argmax}{argmax}
\newcommand{\defn}{:=}
\newcommand{\twomax}[2]{#1 \vee #2}
\newcommand{\e}{\mathrm{e}}
\newcommand{\bG}{\mathbb{G}}
\def\blfootnote{\xdef\@thefnmark{}\@footnotetext}
\renewcommand*{\thefootnote}{\fnsymbol{footnote}}
\newcommand{\Btglob}{\cB_2(t, [0,1])}
\newcommand{\Btloc}[1]{\cB_1(t, #1)}
\newcommand{\predThat}{\widehat{\cT}_r^+}
\newcommand{\predT}{{\cT}_r^+}
\newcommand{\That}{\widehat{\cT}_r}
\newcommand{\trueT}{\cT_r}
\newcommand{\qglob}{q_\mathrm{glob}}
\newcommand{\hatqglob}{\hat q_\mathrm{glob}}
\newcommand{\qloc}[1]{q_\mathrm{loc}(#1)}
\newcommand{\hatqloc}[1]{\hat q_\mathrm{loc}(#1)}
\title{Data-Adaptive Tradeoffs among Multiple Risks in Distribution-Free Prediction}
\date{\vspace{-0.2cm} March 2024 \vspace{-0.5cm}}
\author[1]{Drew T. Nguyen$^{*}$}
\author[2]{Reese Pathak$^{*}$}
\author[2]{Anastasios N. Angelopoulos}
\author[3]{Stephen Bates}
\author[1]{Michael I. Jordan}
\affil[1]{Department of Statistics, UC Berkeley}
\affil[2]{Department of EECS, UC Berkeley}
\affil[3]{Department of EECS, MIT}
\begin{document}
\maketitle
\def\thefootnote{*}\footnotetext{Equal contribution. Contact: \texttt{drew.t.nguyen@berkeley.edu}, \texttt{pathakr@berkeley.edu}}\def\thefootnote{\arabic{footnote}}
\begin{abstract}
        Decision-making pipelines are generally characterized by tradeoffs among various risk functions. It is often desirable to manage such tradeoffs in a data-adaptive manner. As we demonstrate, if this is done naively, state-of-the art uncertainty quantification methods can lead to significant violations of putative risk guarantees. 
    To address this issue, we develop methods that permit valid control of risk when threshold and tradeoff parameters are chosen adaptively. Our methodology supports monotone and nearly-monotone risks, but otherwise makes no distributional assumptions. 
    To illustrate the benefits of our approach, we carry out numerical experiments on synthetic data and the large-scale vision dataset MS-COCO.   
\end{abstract}

\section{Introduction}\label{sec:intro}

In modern machine learning, a focus on complex prediction models and autonomous decision-making is typical, reflecting the engineering focus of its practitioners. However, guaranteeing that the quality of these predictions (or decisions) are within desired tolerances requires good uncertainty quantification (UQ), a classically statistical issue.  
A burgeoning literature on conformal prediction proposes a solution for this problem, based on treating these complex predictors as unknown black boxes~\cite{angelopoulosGentleIntroductionConformal2022a}. 

Many such ``black box'' conformal methods, as applied in supervised learning, are able to use an auxiliary \emph{calibration dataset} $\{(X_i, Y_i)\}_{i=1}^n \subset \cX \times \cY$ to produce a \emph{prediction set} which is a subset of the label space $\cY$, for a guarantee of the form 
\[
\mathbb{P}\Big[R(\cC(X_{n+1}), Y_{n+1}) \leq \alpha\Big] 
\geq 1-\delta,
\]
where the risk function $R \colon 2^\cY \times \cY \to \mathbb{R}$ measures the quality of a prediction 
set in containing the true label. For example, in a $K$-class, multi-label classification setting, the prediction set $\cC(X)$ could correspond to possible 
classes for input $X$, and the risk 
may be a \emph{false negative rate}: the proportion of positive classes in the labels $Y$ that are missed by the elements of the prediction set $\cC(X)$. In practice, these predictions often constitute the final decision made by an autonomous system, and an appropriate risk measures the consequences of incorrect decisions. 

In this work, we attempt to address a major shortcoming of  existing UQ methods as described above:
they typically assume that the data analyst has selected the tolerance level $\alpha$ in advance of observing any data, and has already determined a risk $R$ to control. Practitioners, on the other hand, 
often select tolerance levels in a data-dependent way, invalidating the guarantees of UQ methods and hindering their applicability. This holds especially for complex machine learning systems, which are expensive to train and tune. 

\begin{figure}
    \centering
    \includegraphics[width = \textwidth]{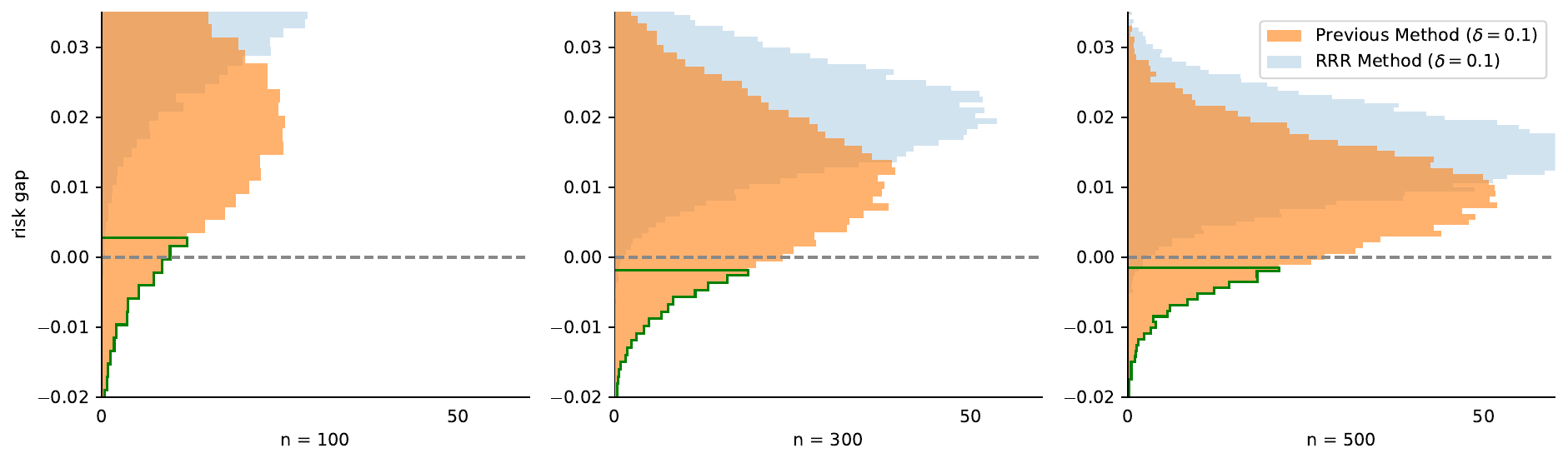}
    \caption{Histograms for 20K realizations of the risk gap $\alpha - \texttt{FNR}$ under a data-dependent choice of $\alpha$, given observations of $n$ calibration points. Each histogram represents different ways to set $\alpha$. The solid outline has total area $\delta = 0.1$.} \label{fig:mscoco_violation}
\end{figure}

Despite this, one may naïvely hope that the guarantees hold approximately well, so as to be useful in an engineering setting. That is, a system builder may wish to proceed without considering the data-dependence in the choice of a tolerance level $\alpha$, hoping that this will not lead to undesirable violations in practice. But is this actually true? In general, no. 
In Figure~\ref{fig:mscoco_violation}, we present the results of an experimental case study, discussed further in Section~\ref{sec:violation}, in which we measure the risk gap---the difference between $\alpha$ and the false negative rate (FNR) of a deployed multi-label classifier---when applying methods from the papers~\cite{batesDistributionfreeRiskcontrollingPrediction2021a, angelopoulosLearnThenTest2022b} with a data-dependent choice of $\alpha$ and with the exceedance probability parameter $\delta$  set to $0.1$. 
The histogram corresponding to this method shows the FNR exceeding $\alpha$ with probability greater than $\delta = 0.1$, which runs counter to the conservative finite-sample guarantees of \cite{batesDistributionfreeRiskcontrollingPrediction2021a} that hold for fixed $\alpha$. In fact, the exceedance occurs with empirical probability $0.14$ when $n = 300$ and $n = 500$. The high-level problem here is clear: selecting $\alpha$ after the observation of data can lead to a notable lack of control.

The present work addresses this problem by providing risk guarantees that account for data-dependent, post hoc choices of $\alpha$.
Figure~\ref{fig:mscoco_violation} also depicts the risk gap for a method we propose called \emph{restricted risk resampling} (RRR), which controls the FNR even after $\alpha$ is chosen to optimize a tradeoff. Though the risk gap may seem large,\footnote{When $n = 500$, the risk gap of RRR is larger than that of the previous method by about $0.01$ on average.} this reflects the flexibility of RRR---it allows the analyst to revise their choice of $\alpha$ \emph{in any way based on the calibration data} while still retaining a guarantee of low risk. This is particularly useful when data is scarce, as in medical diagnostics, and a classifier is employed at multiple different sensitivity levels. The underlying mathematical tool here is a simultaneity guarantee, and one of our main contributions is to develop  methods that possess a general simultaneity property, by establishing new theoretical results regarding the uniform convergence of monotone functions.

Section \ref{sec:method} is the core of this work. It states the key results in the form of uniform confidence bounds, including a functional analog of an inequality of~\cite{bentkusHoeffdingInequalities2004}. We adapt these results, as corollaries, into several risk-control procedures, valid for monotone losses and risks: (1) control via a nonasymptotic upper bound, (2) \textit{risk resampling}, a bootstrap-based procedure which is asymptotically exact, (3) \textit{restricted risk resampling}, a refinement of risk resampling which optionally ignores large choices of $\alpha$, and (4) extensions to certain classes of non-monotone functions.

The remainder of the paper supports these core results. Section \ref{sec:experiments} contains experiments using the results of Section \ref{sec:method}. The theoretical underpinnings of Section \ref{sec:method}, regarding empirical process theory for monotonically-indexed function classes, are presented in \Cref{sec:bounds}, with careful proofs deferred to Appendix \ref{app:proofs}. We discuss and conclude in Section~\ref{sec:discussion}.

\subsection{Case study: risk tradeoffs and risk control on MS-COCO}\label{sec:violation}

\begin{figure}
    \centering
    \includegraphics[width = \textwidth]{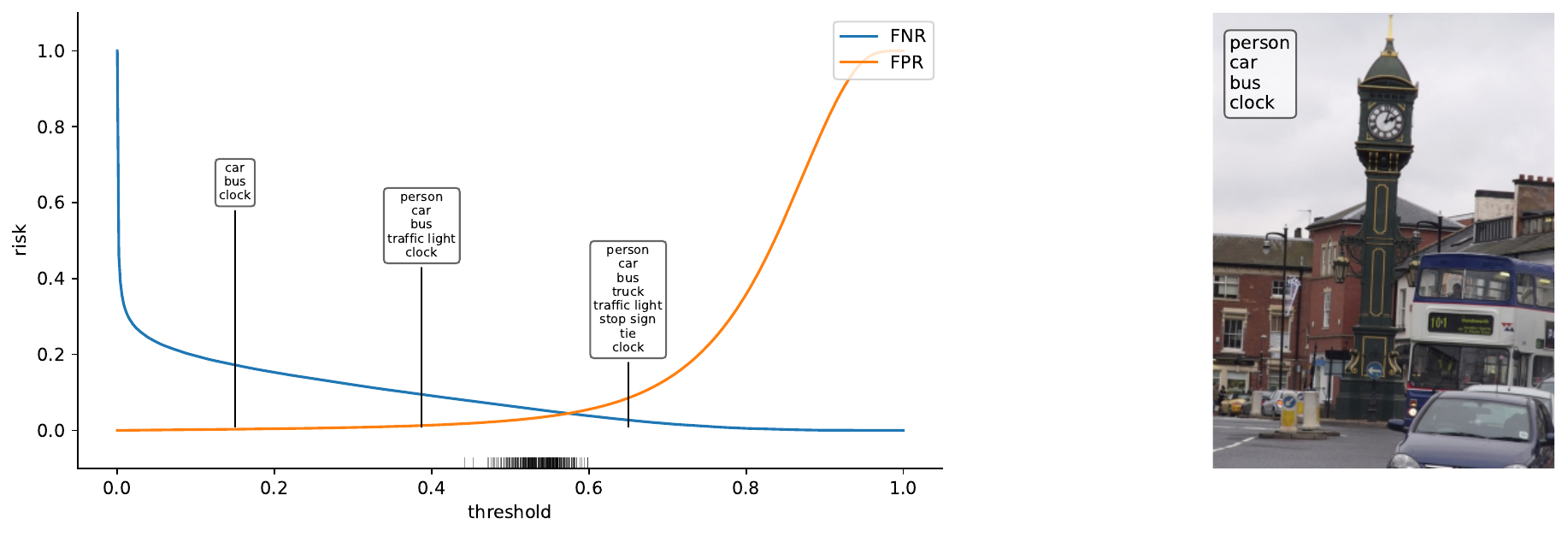}
    \caption{Prediction sets on an example from MS-COCO using the classifier $\cC_t$ of Section \ref{sec:violation}, as the threshold $t$ is varied. Also plotted is the FNR and FPR averaged over 60K held-out images, which are unobserved in practice; additionally, a rug plot of the scheme $t = \hat t$ optimized as in equation \eqref{eq:eventradeoff}, based on simulation draws of size $n = 500$.} \label{fig:mscoco_risks}
\end{figure}

We provide a basic description of the experimental setup presented in Figure~\ref{fig:mscoco_violation}, with complete details deferred to Appendix \ref{app:details_violation}.  The 2014 MS COCO dataset \cite{linMicrosoftCOCOCommon2015} consists of images $X \in \cX$ depicting everyday scenes in which any number of $K = 80$ common objects may be present (e.g., \texttt{dog}, \texttt{train}, \texttt{chair}). The label of each image $X \in \cX$ is a vector $Y \in \{0,1\}^K$, corresponding to the 80 classes which may be present in the image. The task of predicting such a $Y$ is called \textit{multi-label} classification. We now outline how we used this dataset in the experiment presented in Figure \ref{fig:mscoco_violation}.

To create a predictor, we trained a neural network that outputs scores, $f(X) \in [0,1]^K$, such that a higher score $[f(X)]_k$ on the $k$th component roughly corresponds to a higher chance that $Y_k = 1$. The final classification is then performed with a threshold classifier, implemented as $\cC_t : \cX \to \{0,1\}^K$ with $k$-th component
\begin{equation}\label{eq:threshold-classifier}
    [\cC_t(X)]_k = 1\{[f(X)]_k > (1 - t)\}.
\end{equation}
For a classifier $\cC_t$, and a labeled image $(X, Y)$, define the false negative proportion as $\ell(t; (X,Y)) = \#\{k : [\cC_t(X)]_k = 0\} / \#\{k: [Y]_k = 1\}$, which is the number of false negatives over true positives, using the threshold $t$. The false positive proportion $q(t, (X,Y)) = \#\{k : [\cC_t(X)]_k = 1\} / \#\{k: [Y]_k = 0\}$ is likewise the number of false positives over true negatives, using the threshold $t$. 

After training, we observe $n$ additional calibration data points $\{(X_i, Y_i)\}_{i = 1}^n$. In this example, we wish to use them to choose the threshold $t$ to \textit{trade off} the false negative rate (FNR) and the false positive rate (FPR), denoted as $L(t)$ and $Q(t)$, respectively:
\[
    L(t) = \EE[\ell(t, (X,Y))], \quad Q(t) = \EE[q(t, (X,Y))].
\]
In real problems, any scheme used to choose the threshold tends to be based on data visualization and intuitive consideration of the problem domain. For concreteness we make a stylized choice here, one that could be plausibly implemented by a practitioner.  Specifically, given the $n$ observations, we take $\hat t$ to trade off the empirical risks evenly:
\begin{equation}\label{eq:eventradeoff}
    \hat t = \underset{{t \in [0,1]}}{\argmin}\sum_{i = 1}^n \ell(t; (X_i,Y_i)) + q(t; (X_i, Y_i)).
\end{equation}
As context for this choice of $\hat t$, see Figure \ref{fig:mscoco_risks} for an illustration of the tradeoffs inherent in this problem.

We would also like to use the same $n$ calibration data points to \textit{control} these risks. For example, we say the FNR is controlled below $\alpha$ with probability at least $1 - \delta$ if 
\begin{equation}\label{eq:controldef}
    \PP\Big(L(\hat t) \leq \alpha \Big) \geq 1 - \delta.
\end{equation}
In previous work \cite{batesDistributionfreeRiskcontrollingPrediction2021a} such control is achieved via the following approach: choose the threshold with a special scheme $\hat t$ such that the upper bound of \cite{waudby-smithEstimatingMeansBounded2023} is below $\alpha$ for all $t \geq \hat t$.
In general, however, such an approach is inadequate, because the control level $\alpha$ can be data-dependent or random.  In particular, in the current example our choice of $\hat t$ is determined by an estimated risk-reward tradeoff, as is typical in applications.

We now need to plug in some quantity for $\alpha$ satisfying equation $\eqref{eq:controldef}$, which simply amounts to using the tightest upper bound available for the random variable $L(\hat t)$. Here we will consider two alternatives. First, we can ignore the data dependence in $\alpha$ and use the tight bound of \cite{waudby-smithEstimatingMeansBounded2023}, which is valid for fixed $\alpha$. Call this value $\alpha_1$. Second, we can apply the bound that comes from the RRR procedure of Section \ref{sec:loc-sim}, specifically by plugging the chosen threshold $\hat t$ into the right-hand side of Equation \eqref{eq:locsim-res1}---call this $\alpha_2$. 

Figure 1 plots the histograms for the random variables $\alpha_1 - L(\hat t)$ in front and $\alpha_2 - L(\hat t)$ behind. The lack of control shown by the histogram in front, and the difficulty of formalizing the choice of tradeoffs in practice, motivates our proposal of the RRR method, which is valid even without an explicit scheme such as Equation \eqref{eq:eventradeoff}.

\subsection{Prior work}

Our work belongs to the general body of work in distribution-free, frequentist  uncertainty quantification, as exemplified by conformal prediction \cite{vovkAlgorithmicLearningRandom2022c,angelopoulosGentleIntroductionConformal2022a}. While our work is in the general vein of conformal prediction, it is also distinct in its focus on general risk functions, as in earlier work on risk-controlling prediction sets \cite{batesDistributionfreeRiskcontrollingPrediction2021a}. As in that work, we augment the classical framework of conformal prediction via multiple-testing-based arguments and concentration results. The multiple-testing aspect of our work draws on a long tradition of methodology for simultaneous inference in various problem domains, from Scheffé's method for inference on all contrasts in linear regression \cite{scheffe1953method} to more modern problems \cite{dickhausSimultaneousStatisticalInference2014,goemanMultipleTestingExploratory2011a,katsevichSimultaneousHighprobabilityBounds2020a,berkValidPostselectionInference2013}. We also draw from empirical process theory, where uniform versions of concentration results provide tools for simultaneous inference in general settings~\cite{shorackEmpiricalProcessesApplications2009,kosorokIntroductionEmpiricalProcesses2008a}.  

Our focus is hypothesis testing. Recall that in the standard Neyman-Pearson paradigm the methodological goal is to maximize power subject to type-I error control at pre-specified level $\alpha$.  It has been noted that this paradigm is unjustified from a decision-theoretic standpoint~\cite{lehmannUniformlyMostPowerful2005, tetenovEconomicTheoryStatistical2016,grunwaldNeymanPearson2023}. Instead, in the words of Lehmann and Romano, $\alpha$ should be chosen ``in relation to the attainable power'' \cite{lehmannUniformlyMostPowerful2005}.
Attempts to find principled ways to set $\alpha$ in hypothesis testing go back to the 1950s \cite{lehmannSignificanceLevelPower1958a}. There are also more direct connections between distribution-free inference and decision-making with more explicit utility maximization \cite{vovkConformalPredictiveDecision2018,straitouriImprovingExpertPredictions2023, lekeufackConformalDecisionTheory2023h}. 

There has been recent interest in the control of tradeoffs between multiple risks \cite{laufer-goldshteinEfficientlyControllingMultiple2022a,linFastOnlineValuemaximizing2023,teneggiHowTrustYour2023}, notably the trading off of set size and coverage in conformal prediction \cite{sadinleLeastAmbiguousSetValued2019a,dhillonExpectedSizeConformal2023}. This work generally falls within the Neyman-Pearson framework of constrained optimization and is distinct conceptually from our work.

Uniform bounds have been studied recently in distribution-free novelty detection problems~\cite{bates2023testing,gazin2023transductive}. The bounds obtained in that work focus on empirical cumulative distribution functions, which correspond to binary-valued losses in our context. We also note the work of \cite{sarkarPostselectionInferenceConformal2023a} who share our focus on tradeoffs in distribution-free uncertainty quantification via uniform bounds, but again, their scope is limited to binary losses. Further results on the binary setting can be found in the work of  \cite{vovkConditionalValidityInductive2012,park2020pac,bianTrainingconditionalCoverageDistributionfree2023,liang2023algorithmic, amann2023assumption}.

\section{Risk Control with Uniform Bounds}\label{sec:method}

In this section, we focus on uniform convergence results for monotone losses, and show how these results can be adapted for risk control. 

In Section~\ref{sec:setting-technical}, we set notation and set up our risk-control problem.
We then present two uniform convergence results,  the first of which (Section~\ref{sec:finite-sample-monotone}) is a nonasymptotic concentration inequality, and the second (Section~\ref{sec:asymptotic-monotone}) a functional central limit theorem. We then present corollaries of these results that demonstrate how they can be used to establish risk control in practice. The proofs of the main results are deferred to~\Cref{sec:bounds}, with technical arguments deferred to Appendix~\ref{app:proofs}.

These results are then extended to be more practically useful by relaxing the requirements of uniformity (Section \ref{sec:loc-sim}) and monotonicity (Section~\ref{sec:beyondmono}). 
Proofs for Section \ref{sec:loc-sim} are provided in Appendix \ref{app:zrnicproof}.

\subsection{Setting}\label{sec:setting-technical}
Consider an input space $\cX$, an output space $\cY'$, and a label space $\cY$.
Consider also a family of predictors, $\cC_t \colon \cX \to \cY'$, indexed by a real-valued parameter $t \in [0,1]$.
Furthermore, let $\ell \colon \cY' \times \cY \to [0,1]$ denote a bounded loss function.
We impose the condition that the loss is \emph{monotone} in the parameter $t$, meaning the following implication holds:
\begin{equation}
    \label{eq:monotonicity}
    \mbox{if}~t \leq s, \quad \mbox{then} \quad 
    \ell(\cC_t(x), y) \geq \ell(\cC_s(x), y), 
    \quad\mbox{for all}~x \in \cX, y \in \cY. 
\end{equation}

Finally, suppose we are given a dataset $\mathcal{D}_n := (X_1, Y_1), \dots, (X_n, Y_n)$, which is drawn i.i.d.\ from a probability distribution $P$. Before deployment of the predictor $\cC_t$, the user chooses $t = \hat t$ based on this dataset, for example by considering risk tradeoffs. Define the population and empirical risks as 
\[
L(t) := \EE_{(X, Y) \sim P}\big[\ell(\cC_t(X), Y)\big] \quad \mbox{and} 
\quad 
\hat L_n(t) = \frac{1}{n} \sum_{i = 1}^n \ell(\cC_t(X_i), Y_i),
\] 
respectively.  Our goal is to compute a bound $\alpha$ such that $\PP(L(\hat t) \leq \alpha) \geq 1 - \delta$, which means that the risk of $\cC_{\hat t}$ is controlled below $\alpha$ with probability at least $1 - \delta$.

A wide range of schemes may be used to choose $\hat t$ in practice, including schemes that are difficult to formalize, and thus we prefer not to target any specific scheme. Instead, our approach to risk control is based on upper confidence bounds that are \emph{uniform}.
\begin{temp}
For some range $\cT \subset [0,1]$, compute a upper confidence bound $\Lhatp_n(t)$ such that
\[
L(t) \leq  \Lhatp_n(t)
\quad \mbox{simultaneously for all}~t \in \cT, 
\]
with probability at least $1 - \delta$. Then for any $\hat t \in \cT$, the risk of $\cC_{\hat t}$ is controlled below $\alpha = \Lhatp_n(\hat t)$. 
\end{temp} 

\subsection{Finite-sample result for monotone losses}\label{sec:finite-sample-monotone}

We first present a finite-sample uniform concentration bound based on a novel extension of the Dvoretsky-Kiefer-Wolfowitz (DKW) inequality~\cite{massartTightConstantDvoretzkyKieferWolfowitz1990} to monotone losses. To state it, we recall some notation.
Associated with the population and empirical risks are the following rescaled and centered processes 
and associated suprema: 
\[
\bG_n(t) \defn \sqrt{n} (\hat L_n(t) - L(t)), \quad 
D_n^+ \defn \sup_{t} \bG_n(t), \quad \mbox{and} \quad 
D_n^- \defn \sup_t -\bG_n(t).
\]
We then have the following nonasymptotic concentration inequality.
\begin{theorem}\label{thm:nonasymp_risk}
For every $\lambda > 0$, we have 
\[
\twomax{\PP\{D_n^{-} > \lambda\}}{
\PP\{D_n^+ > \lambda\}}
\leq 
\e \, \exp(-2 \lambda^2). 
\]
\end{theorem}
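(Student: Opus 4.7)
The plan is to reduce the statement to Massart's classical one-sided Dvoretzky--Kiefer--Wolfowitz inequality via a layer-cake coupling. For each data point $(X_i, Y_i)$, the map $t \mapsto \ell_i(t) \defn \ell(\cC_t(X_i), Y_i)$ is a non-increasing, $[0,1]$-valued function on $[0,1]$ by~\eqref{eq:monotonicity}. I draw $U_1, \dots, U_n \sim \Unif[0,1]$ independently of the data and set $Z_i \defn \inf\{t \in [0,1] : \ell_i(t) \leq U_i\}$, with $\inf \emptyset = +\infty$. Monotonicity of $\ell_i$ then gives $\ind{Z_i > t} = \ind{\ell_i(t) > U_i}$, so
\[
\EE\big[\ind{Z_i > t} \,\big|\, X_i, Y_i\big] = \ell_i(t), \qquad \PP(Z_i > t) = L(t).
\]
Thus the $Z_i$ are i.i.d.\ with survival function $L$, and their empirical survival function $\tilde F_n(t) \defn n^{-1}\sum_{i=1}^n \ind{Z_i > t}$ satisfies $\EE[\tilde F_n(t) \mid \cF] = \hat L_n(t)$, where $\cF = \sigma((X_i,Y_i)_{i=1}^n)$.

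I next invoke Massart's one-sided DKW bound applied to the i.i.d.\ sample $Z_1, \dots, Z_n$:
\[
\PP\big(\sqrt{n}\sup\nolimits_t (\tilde F_n(t) - L(t)) > \lambda\big) \leq \exp(-2\lambda^2),
\]
together with its symmetric counterpart for the other tail. To transfer this to $\bG_n$, I use that $D_n^+ \leq \EE[\tilde D_n^+ \mid \cF]$ with $\tilde D_n^+ \defn \sqrt{n}\sup_t(\tilde F_n - L)$, since the supremum of conditional expectations is dominated by the conditional expectation of the supremum. Combining this inequality with conditional Jensen for the convex map $\exp(s\,\cdot)$ yields the moment generating function domination $\EE[\exp(sD_n^+)] \leq \EE[\exp(s\tilde D_n^+)]$, after which a Chernoff--Markov argument produces a tail inequality for $D_n^+$ that depends only on the MGF of the classical DKW supremum.

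The main obstacle is extracting the stated prefactor $\e$. Integrating Massart's tail bound into an MGF yields an estimate of the form $\EE[\exp(s \tilde D_n^+)] \leq 1 + C s \exp(s^2/8)$, and optimizing $s \approx 4\lambda$ in the Chernoff step recovers the Gaussian decay $\exp(-2\lambda^2)$ but with a potentially polynomial-in-$\lambda$ prefactor. Closing the gap to the sharp constant $\e$ likely requires a more direct MGF computation for the empirical survival process---exploiting that, for each fixed $t$, $\ind{Z_i > t}$ is a Bernoulli with parameter $L(t)$---or an appeal to the functional analog of Bentkus's refinement of Hoeffding that the authors highlight as a key tool. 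The bound on $D_n^-$ then follows by applying the same argument to the non-increasing family $1 - \ell_i(\cdot)$, or equivalently by reversing the orientation of the parameter $t$.
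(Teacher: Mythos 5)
Your reduction to the classical one-sided DKW inequality is, in its first half, exactly the paper's argument. The coupling $Z_i = \inf\{t : \ell_i(t) \le U_i\}$ is the same randomization the paper uses in the proof of Theorem~\ref{thm:dkw-os} (there, a point $T_i$ is drawn from the CDF $t \mapsto f_t(Z_i)$), and your observation that $D_n^+ = \sup_t \EE[\sqrt{n}(\tilde F_n(t)-L(t)) \mid \cF] \le \EE[\tilde D_n^+ \mid \cF]$, combined with conditional Jensen, is precisely the content of the paper's Lemma~\ref{lem:disperse}: it establishes the increasing convex order $D_n^+ \le_{\textup{icx}} \tilde D_n^+$ between your supremum and the classical DKW supremum of an i.i.d.\ sample with survival function $L$. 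Up to that point the argument is sound (modulo the harmless right-continuity normalization of $\ell_i$ that the paper also invokes).

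The genuine gap is in the final step, and you have diagnosed it correctly yourself: the Chernoff--MGF route cannot deliver the stated bound. Integrating Massart's tail into $\EE[\exp(s\tilde D_n^+)]$ gives at best $1 + Cs\exp(s^2/8)$, and optimizing $s$ yields $\PP(D_n^+ > \lambda) \lesssim \lambda \exp(-2\lambda^2)$, which for large $\lambda$ is strictly weaker than $\e\exp(-2\lambda^2)$; no choice of $s$ removes the polynomial prefactor, so the proposal as written does not prove the theorem. The missing idea is to test the icx ordering against the hockey-stick functions $\varphi_r(x) = (x-r)_+$ rather than exponentials. Doing so for every $r$ gives an inequality between the Bentkus transforms of the two survival functions, $\mathcal{B}[S](x) \le \mathcal{B}[S^*](x)$ where $\mathcal{B}[S](x) = \inf_{r < x} \EE[(D_n^+ - r)_+]/(x-r)$. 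Markov's inequality gives $S \le \mathcal{B}[S]$, and the Kemperman--Pinelis lemma (the paper's Lemma~\ref{lem:kemperman}) gives $\mathcal{B}[S^*](x) \le \e\,[S^*]^{\circ}(x)$, the log-concave hull of $S^*$. Because Massart's bound $S^*(x) \le \exp(-2x^2)$ has a log-concave right-hand side, the hull satisfies $[S^*]^{\circ}(x) \le \exp(-2x^2)$ as well, and the constant $\e$ drops out in place of the polynomial factor. That lemma is the one nontrivial ingredient your sketch leaves unsupplied; everything before it matches the paper's proof.
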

\noindent See Section~\ref{sec:proof-thm-nonasymp} for a proof of this theorem. It can be viewed as a functional analog of the inequalities of \cite{bentkusHoeffdingInequalities2004}, in particular his equation (1.1). 

Note that the dependence on $\lambda$ is optimal, but the prefactor $\e$ appearing in Theorem~\ref{thm:nonasymp_risk} is known to be improvable (to 1) in the special case that $\ell$ is a indicator function, as shown by \cite{massartTightConstantDvoretzkyKieferWolfowitz1990}. 
Before proceeding, we record an immediate consequence of Theorem~\ref{thm:nonasymp_risk}: an upper confidence bound (which can be used for risk control), and a lower confidence bound. 
\begin{cor}[Nonasymptotic confidence bounds]\label{cor:finite-sample}
For every sample size $n \geq 1$ and any $\delta \in (0, 1)$ it 
holds that 
\begin{equation}\label{eq:bdkw-ctrl-UB}
L(t) \leq \hat L_n(t) + \sqrt{\frac{\log(\e/\delta)}{2 n}} 
\quad \mbox{simultaneously for all}~t, 
\end{equation}
with probability at least $1 - \delta$. Similarly, 
we have 
\begin{equation}\label{eq:bdkw-ctrl-LB}
L(t) \geq 
\hat L_n(t) - \sqrt{\frac{\log(\e/\delta)}{2 n}} 
\quad \mbox{simultaneously for all}~t,
\end{equation}
with probability at least $1 - \delta$. \noeqref{eq:bdkw-ctrl-LB}
\end{cor}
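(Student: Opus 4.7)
The plan is to derive both confidence bounds by inverting the tail inequality of Theorem~\ref{thm:nonasymp_risk}, choosing $\lambda$ so that the right-hand side equals $\delta$. Concretely, I would set $\lambda = \sqrt{\tfrac{1}{2}\log(\e/\delta)}$, which makes
\[
\e\, \exp(-2\lambda^2) = \e \cdot \exp\!\bigl(-\log(\e/\delta)\bigr) = \delta.
\]
Theorem~\ref{thm:nonasymp_risk} then gives $\PP\{D_n^+ > \lambda\} \leq \delta$, i.e., with probability at least $1 - \delta$,
\[
\sup_t \sqrt{n}\bigl(\hat L_n(t) - L(t)\bigr) \leq \sqrt{\tfrac{1}{2}\log(\e/\delta)}.
\]
Rearranging and dividing by $\sqrt{n}$ yields the simultaneous-in-$t$ inequality \eqref{eq:bdkw-ctrl-UB}.

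For the lower bound, I would apply the same argument to $D_n^-$, using the other half of the $\max$ on the left-hand side of Theorem~\ref{thm:nonasymp_risk}. The same choice of $\lambda$ gives $\PP\{D_n^- > \lambda\} \leq \delta$, and since $D_n^- = \sup_t(L(t) - \hat L_n(t))\sqrt{n}$, rearranging gives \eqref{eq:bdkw-ctrl-LB} uniformly in $t$.

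There is essentially no obstacle here beyond bookkeeping: the corollary is a direct tail-inversion of Theorem~\ref{thm:nonasymp_risk}, and the two bounds are proved by the same calculation applied to $\bG_n$ and $-\bG_n$ respectively. The only subtlety worth flagging is that the uniform-in-$t$ quantifier is already baked into the definitions of $D_n^\pm$ as suprema, so no additional union bound or covering argument is needed; this is exactly where the strength of Theorem~\ref{thm:nonasymp_risk} is used.
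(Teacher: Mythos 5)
Your proposal is exactly the paper's (implicit) argument: the corollary is recorded as an ``immediate consequence'' of Theorem~\ref{thm:nonasymp_risk}, obtained by the tail inversion $\lambda = \sqrt{\tfrac{1}{2}\log(\e/\delta)}$ that you carry out. One bookkeeping slip: with the paper's convention $\bG_n(t) = \sqrt{n}(\hat L_n(t) - L(t))$, it is $D_n^- = \sup_t \sqrt{n}(L(t) - \hat L_n(t))$ that yields the upper confidence bound \eqref{eq:bdkw-ctrl-UB} and $D_n^+$ that yields the lower bound \eqref{eq:bdkw-ctrl-LB} --- you have the two swapped --- but since Theorem~\ref{thm:nonasymp_risk} bounds both suprema identically, the conclusion is unaffected.
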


We note that this bound can be quite conservative for practical problems as it does not account for the variance in the process $\bG_n$, which can be substantially smaller than the worst case. In the extreme case of zero variance, meaning that the loss $\ell(C_t(X), Y)$ is a non-random function of $t$, then $D_n^{+} = D_n^{-} = 0$ deterministically, which Theorem~\ref{thm:nonasymp_risk} does not account for.

This issue also occurs for bounds based on standard concentration results such as Hoeffding's inequality. In practice, it is often more appealing to use confidence bounds based on an asymptotic normal approximation which do adapt to the variance, as we describe next. 

\subsection{Asymptotic result for monotone losses}
\label{sec:asymptotic-monotone}

Motivated by the lack of instance-adaptivity of the nonasymptotic concentration inequality, we now present a tighter uniform confidence band. It hinges on the following functional central limit theorem, proven in \Cref{sec:bounds}, using the monotonicity assumption on the losses.
\begin{theorem}\label{thm:fclt}
The rescaled, centered process $\bG_n$ converges in distribution to a centered Gaussian process $\bG$.
\end{theorem}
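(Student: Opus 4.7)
The plan is to identify the desired convergence as a Donsker-type result for the function class $\cF := \{f_t \colon (x,y) \mapsto \ell(\cC_t(x), y) \mid t \in [0,1]\}$ and to establish it via a bracketing entropy bound combined with the standard bracketing central limit theorem (see, e.g., \cite{kosorokIntroductionEmpiricalProcesses2008a}). Showing that $\cF$ is $P$-Donsker yields weak convergence of $\bG_n$ in $\ell^\infty([0,1])$ to a centered Gaussian process $\bG$ with covariance kernel $K(s,t) = \EE_P[f_s(X,Y)\,f_t(X,Y)] - L(s)L(t)$, which is exactly the conclusion of the theorem.

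For the main calculation, fix $\epsilon > 0$. Since $L$ is monotone decreasing (a consequence of \eqref{eq:monotonicity}) and takes values in $[0,1]$, one can select a grid $0 = t_0 < t_1 < \cdots < t_N = 1$ of size $N = O(1/\epsilon^2)$ that (i) includes every jump of $L$ of size exceeding $\epsilon^2$ (there are at most $1/\epsilon^2$ such jumps since $L$ has total variation at most $1$), and (ii) satisfies $L(t_i) - L(t_{i+1}) \leq \epsilon^2$ on each subinterval not straddling a jump. Monotonicity guarantees that $f_{t_{i+1}}(x,y) \leq f_t(x,y) \leq f_{t_i}(x,y)$ for $t \in [t_i, t_{i+1}]$, so $[f_{t_{i+1}}, f_{t_i}]$ is a valid bracket. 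Because $0 \leq f_{t_i} - f_{t_{i+1}} \leq 1$, we obtain the crucial bound
\[
\|f_{t_i} - f_{t_{i+1}}\|_{L_2(P)}^2 \leq \EE_P\big[f_{t_i} - f_{t_{i+1}}\big] = L(t_i) - L(t_{i+1}) \leq \epsilon^2,
\]
so each bracket has $L_2(P)$-size at most $\epsilon$. Thus $\log N_{[]}(\epsilon, \cF, L_2(P)) = O(\log(1/\epsilon))$, and the bracketing integral $\int_0^1 \sqrt{\log N_{[]}(\epsilon, \cF, L_2(P))}\, d\epsilon$ is finite.

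The main technical obstacle is handling the subintervals that do straddle a large jump of $L$: near a jump at $t^*$ with jump size $h > \epsilon^2$, no bracket of the form $[f_s, f_{t^*}]$ with $s < t^*$ can have $L_2(P)$-size below $\epsilon$, because $\|f_s - f_{t^*}\|_{L_2(P)}$ is bounded below by $\sqrt{L(s) - L(t^*)} \geq \sqrt{h}$. I would resolve this by allowing bracket endpoints that lie outside $\cF$: specifically, at such a $t^*$, use the pointwise left-limit $f_{t^{*-}}(x,y) := \lim_{s \uparrow t^*} f_s(x,y)$ (which exists by the pointwise monotonicity of $t \mapsto f_t(x,y)$ and its boundedness) as the lower endpoint of the bracket just to the left of $t^*$, the analogous right-limit $f_{t^{*+}}$ just to the right, and a singleton bracket $\{f_{t^*}\}$ at the jump itself. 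Each such jump contributes at most three additional brackets whose $L_2(P)$-sizes can be made no larger than $\epsilon$ by the same envelope--to--$L_2$ inequality used above, while only adding $O(1/\epsilon^2)$ brackets in total and thus preserving the entropy rate. Once the bracketing entropy bound is in hand, the bracketing CLT yields Donsker-ness of $\cF$ and hence the claimed weak convergence.
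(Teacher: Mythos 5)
Your proof is correct, but it reaches Donskerness by a genuinely different route than the paper. The paper identifies $f_t(z)=\ell(\cC_t(x),y)$ as a \emph{monotonically-indexed} class, bounds its VC subgraph dimension by $K+1$ (Proposition~\ref{prop:vc}), and then invokes the uniform-entropy Donsker theorem (Theorem 19.14 and Lemma 19.15 of van der Vaart), which requires a ``suitable measurability'' side condition that the paper discharges by passing to a countable dense subclass via right-continuity. You instead bound the \emph{bracketing} entropy: the monotone-envelope inequality $\|f_{t_i}-f_{t_{i+1}}\|_{L_2(P)}^2\le \EE[f_{t_i}-f_{t_{i+1}}]=L(t_i)-L(t_{i+1})$ turns a partition of the range of $L$ into $O(1/\epsilon^2)$ brackets of $L_2$-size $\epsilon$, the jumps of $L$ are absorbed by brackets whose endpoints are pointwise one-sided limits (legitimately outside $\cF$), and Ossiander's bracketing CLT finishes. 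Your route buys a fully quantitative entropy bound and sidesteps the measurability condition entirely (bracketing theorems need no pointwise-separability hypothesis); the paper's VC route is what generalizes to $K$-dimensional indices $t\in[0,1]^K$ and is reused for its other results. One small slip in your motivational remark: by Jensen, $\|f_s-f_{t^*}\|_{L_2(P)}$ is bounded \emph{below} by $L(s)-L(t^*)$ and \emph{above} by $\sqrt{L(s)-L(t^*)}$, so the claimed lower bound $\sqrt{h}$ should read $h$; this does not affect the construction, since the jump-handling via one-sided limits is needed (and suffices) regardless.
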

The next lemma is technical and is proven in Appendix \ref{app:lempropproof}. It essentially shows that to estimate the process $\bG$, we can sample with replacement from the data---in other words, a bootstrap approach suffices, asymptotically. To state it, we need to introduce some notation relating to the bootstrap.

We use the notation $\mathcal{D}_n^\star \defn \{(X_i^\star, Y_i^\star)\}_{i=1}^n$ to denote $n$ samples, drawn with 
replacement, from the original dataset $\mathcal{D}_n$. We define the \emph{bootstrap empirical risk} as follows:
\[
\hat L_n^\star(t) \defn \frac{1}{n} \sum_{i = 1}^n \ell(\cC_t(X^\star_i), Y^\star_i).
\]
and the associated rescaled and centered process as $\bG_n^\star(t) = \sqrt{n}\big(\hat L_n^\star(t) - \hat L_n(t))$. Denote $D_{n, \star}^{\pm} =\sup_t \pm \bG_n^\star(t)$ to be the supremum of 
this process, where the sign $\pm$ denotes either $+$ or $-$, and $D_{n, \star} = D_{n, \star}^+ \vee D_{n, \star}^-.$

\begin{lemma}\label{lem:bootvalid}
If $\bG_n$ converges in distribution to a Gaussian process $\bG$, then the conditional distribution of the process $\bG_n^\star \mid \cD_n$ converges to the distribution of $\bG$ in probability; also, the conditional distribution of the random variable $D_{n, \star}^{\pm} \mid \cD_n$ converges to the distribution of $\sup_t \pm \bG(t)$ in probability.
\end{lemma}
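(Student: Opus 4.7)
The plan is to reduce both claims to standard empirical-process-theoretic results for the bootstrap, using that the loss class $\mathcal{F} = \{(x,y) \mapsto \ell(\cC_t(x), y) : t \in [0,1]\}$ is $P$-Donsker. This Donsker property is the core ingredient in the proof of Theorem~\ref{thm:fclt} in \Cref{sec:bounds}, and it follows because $\mathcal{F}$ is uniformly bounded in $[0,1]$ and monotonically indexed by $t$, making it a VC-subgraph class (equivalently, a class of bounded uniform entropy).

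For the first claim, I invoke the Giné--Zinn bootstrap consistency theorem (e.g., Theorem 3.6.2 of van der Vaart and Wellner, 1996, or Theorem 2.6 of Kosorok, 2008): for any $P$-Donsker class, the nonparametric bootstrap empirical process $\bG_n^\star$, conditional on $\cD_n$, converges in law to $\bG$ in probability, measured with respect to the bounded-Lipschitz metric on $\ell^\infty(\mathcal{F}) \cong \ell^\infty([0,1])$. Since the premise of the lemma already includes the Donsker conclusion (it states that $\bG_n \Rightarrow \bG$, which here is obtained via $\mathcal{F}$ being Donsker), this theorem directly yields the first statement.

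For the second claim, I apply a continuous-mapping argument. The suprema functionals $\phi^\pm : g \mapsto \sup_t \pm g(t)$ are $1$-Lipschitz on $\ell^\infty([0,1])$, hence continuous. The limit $\bG$ has sample paths that almost surely lie in a separable subspace of $\ell^\infty([0,1])$ (the bounded functions that are uniformly continuous with respect to the intrinsic $L^2(P)$ semimetric), on which $\phi^\pm$ are everywhere continuous. A bootstrap variant of the continuous mapping theorem (e.g., Proposition 10.7 of Kosorok, 2008) preserves conditional weak convergence in probability under such maps, and applying it to $\phi^\pm$ transforms the conditional weak convergence of $\bG_n^\star$ into the conditional weak convergence of $D_{n,\star}^\pm$ to $\sup_t \pm \bG(t)$.

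The main technical obstacle is bookkeeping around measurability, since $\ell^\infty([0,1])$ is nonseparable and weak convergence here is formulated via outer expectations. These issues are standard in this literature and are handled in the proofs of the cited bootstrap and continuous-mapping theorems by exploiting the Donsker property of $\mathcal{F}$ together with the almost-sure sample-path regularity of the Gaussian limit; no new argument specific to the present setup is required.
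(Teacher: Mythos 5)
Your proposal is correct and follows essentially the same route as the paper: both identify the hypothesis $\bG_n \Rightarrow \bG$ with the Donsker property of the loss class, invoke the Gin\'e--Zinn/Kosorok bootstrap central limit theorem (Theorem 2.6 of Kosorok) for conditional weak convergence of $\bG_n^\star$, and then push this through a bootstrap continuous mapping theorem applied to the supremum functionals $g \mapsto \sup_t \pm g(t)$. The only cosmetic difference is the citation of Proposition 10.7 rather than Theorem 10.8 of Kosorok for the continuous-mapping step.
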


The following corollary is an immediate consequence of Theorem \ref{thm:fclt} and Lemma \ref{lem:bootvalid}, and the first result in the corollary, which is an upper confidence bound, can be used directly for risk control. We refer to the overall functional bootstrap procedure as \emph{risk resampling} (RR).

\begin{cor}[Confidence bounds via risk resampling]\label{cor:bootstrap_risk}

Fix $\delta \in (0, 1)$. If $\hat q$ satisfies $\PP(D_{n, \star}^{-} > \hat q \mid \mathcal{D}_n) \leq \delta$, as $n \to \infty$, we have with probability at least $1 - \delta$,  
\begin{equation}\label{eq:bootstrap-UB}
    L(t) \leq \hat L_n(t) + \frac{\hat q}{\sqrt{n}} \quad \mbox{simultaneously for all}~t.
\end{equation}
Alternately, if $\hat q$ satisfies $\PP(D_{n, \star}^{+} > \hat q \mid \mathcal{D}_n) \leq \delta$, then with probability at least $1 - (\delta + o(1))$, as $n \to \infty$, we have with probability at least $1 - \delta$,  
\begin{equation}
    L(t) \geq \hat L_n(t) - \frac{\hat q}{\sqrt{n}} \quad \mbox{simultaneously for all}~t,
\end{equation}
and if $\hat q$ satisfies $\PP(D_{n, \star} > \hat q \mid \mathcal{D}_n) \leq \delta$, then with probability at least $1 - (\delta + o(1))$, as $n \to \infty$, we have with probability at least $1 - \delta$,  
\begin{equation}
    |L(t) - \hat L_n(t)| \leq \frac{\hat q}{\sqrt{n}} \quad \mbox{simultaneously for all}~t.
\end{equation}

\end{cor}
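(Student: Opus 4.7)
The plan is to reformulate each of the three stated simultaneous inequalities as a one-sided tail event for the suprema $D_n^{\pm}$ of the centered empirical process, and then transfer the conditional bootstrap quantile guarantee on $\hat q$ to an unconditional probability bound. I will focus on the upper confidence bound \eqref{eq:bootstrap-UB}; the lower and two-sided bounds follow by an identical argument after replacing $D_n^{-}$ with $D_n^{+}$ and $D_n^{+} \vee D_n^{-}$, respectively.

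First, since $\sqrt{n}(L(t) - \hat L_n(t)) = -\bG_n(t)$, the event $\{L(t) \leq \hat L_n(t) + \hat q/\sqrt{n}~\text{for all}~t\}$ equals $\{D_n^{-} \leq \hat q\}$, so it suffices to show $\PP(D_n^{-} > \hat q) \leq \delta + o(1)$. By the continuous mapping theorem applied to the supremum functional (continuous in the uniform norm on the space of bounded functions on $[0,1]$) together with Theorem~\ref{thm:fclt}, one obtains $D_n^{-} \Rightarrow T^{-} \defn \sup_t(-\bG(t))$. Similarly, Lemma~\ref{lem:bootvalid} yields that the conditional law of $D_{n, \star}^{-} \mid \cD_n$ converges in probability to the law of $T^{-}$.

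Next, I will invoke a standard Gaussian anticoncentration result (such as Tsirelson's theorem on suprema of Gaussian processes) to conclude that the distribution function $F_{T^{-}}$ is continuous on $(0, \infty)$. Letting $q^{*}$ denote the $(1-\delta)$-quantile of $T^{-}$, the hypothesis $\PP(D_{n,\star}^{-} > \hat q \mid \cD_n) \leq \delta$ paired with the in-probability bootstrap convergence gives $\hat q \geq q^{*} - o_P(1)$. Combining with the unconditional weak convergence via the Portmanteau theorem then produces $\PP(D_n^{-} > \hat q) \leq \PP(T^{-} > q^{*}) + o(1) \leq \delta + o(1)$, as required.

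The main obstacle is the formal passage from the bootstrap's in-probability weak convergence to a pointwise statement about the unconditional probability $\PP(D_n^{-} > \hat q)$: one must simultaneously use the weak convergence of $D_n^{-}$ and the convergence of the random quantile $\hat q$, which are not independent. This will be handled cleanly by a Slutsky-type argument on a common probability space together with continuity of $F_{T^{-}}$ at $q^{*}$; no further structural assumptions on $\ell$ or $\cC_t$ are needed beyond those already in force for Theorem~\ref{thm:fclt} and Lemma~\ref{lem:bootvalid}.
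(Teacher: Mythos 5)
Your proposal is correct and follows the route the paper intends: the paper presents Corollary~\ref{cor:bootstrap_risk} as an ``immediate consequence'' of Theorem~\ref{thm:fclt} and Lemma~\ref{lem:bootvalid}, and your argument supplies exactly the standard details---rewriting the simultaneous bound as the event $\{D_n^- \le \hat q\}$, weak convergence of $D_n^-$ and of the conditional law of $D_{n,\star}^-$ to $\sup_t(-\bG(t))$, convergence of the bootstrap quantile to the limiting quantile $q^*$, and a Slutsky/Portmanteau step using continuity of the limiting distribution function. The only caveat worth noting is that continuity of the law of $\sup_t(-\bG(t))$ at $q^*$ can fail at its essential infimum (e.g.\ in degenerate, zero-variance cases), a boundary situation the paper likewise leaves implicit.
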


When the quantity $\hat q$ denotes the conditional $1 - \delta$ quantile, $\inf_q \{q : \PP(D_{n, \star}^\pm > q \mid \cD_n) \leq \delta) \} $, it can be computed exactly, but only in principle; this is generally infeasible as it requires enumeration over all possible $\binom{2n - 1}{n}$ realizations of the bootstrap dataset $\mathcal{D}_n^\star$. In practice (and also in our experiments, as presented in Section~\ref{sec:experiments}) we suggest using Monte Carlo to approximate $\hat q$ via a sample quantile $\hat q_\textup{boot}$; note that this is just the usual, basic bootstrap procedure \cite{tibshiraniIntroductionBootstrap1994, davisonBootstrapMethodsTheir1997}. 

As the Monte Carlo error cannot be ignored, it is of interest to ask how many replicates $B$ are needed. At a minimum, $B$ should be chosen large enough that $\hat q_\textup{boot}$ is stable, conditional on $\cD_n$. A more principled rule of thumb could be to take $B$ large enough so that $|\hat q - \hat q_\textup{boot}| < 0.01 \hat q_\textup{boot}$ with high probability conditional on $\cD_n$, based on a DKW confidence band.\footnote{Specifically, let $F(x) = \PP(D_{n, \star}^\pm \leq x \mid \cD_n)$. Based on $B$ bootstrap replicates, the DKW inequality gives a confidence band $[C^-(x), C^+(x)]$, and if $\hat q^\pm = \inf_q \{q : 1 - C^\pm(q) \leq \delta) \}$, then $[\hat q^+, \hat q^-]$ contains both $\hat q_\textup{boot}$ and $\hat q$ with high probability; $B$ could be chosen until $\hat q^- - \hat q^+$ seems small.}

As shown in Section~\ref{sec:experiments}, we find that the resulting 
bootstrap quantile $\hat q$ is much smaller than the factor $\sqrt{\log(\e/\delta)}$ guaranteed by the nonasymptotic inequality; indeed, Lemma \ref{lem:bootvalid} says it is asymptotically the best possible. Another advantage to the bootstrap procedure from Lemma \ref{lem:bootvalid} is that it can automatically extend to a different choice of the index set of the parameter $t$, namely a subset $\cT \subset [0,1]$. We leverage this in the next section.

\subsection{Extension to localized, simultaneous risk control}\label{sec:loc-sim}

The bootstrap procedure of Section \ref{sec:asymptotic-monotone}, which mimics the data distribution using resamples from the empirical distribution, is flexible in that it allows for certain refinements. In this section, we illustrate one possible refinement which is inspired by~\cite{zrnicLocallySimultaneousInference2023}.

Recalling our setting as outlined in Section \ref{sec:setting-technical}, suppose that before deployment of a predictive algorithm $\cC_t$, the user only wishes to choose the parameter $t$ within a subset of $[0,1]$. For example, in Section \ref{sec:violation}, where $L(t)$ represents the FNR risk in image classification, the user may wish to restrict to the sublevel set
\[
\mathcal{T}_r \defn \Big\{t \in [0, 1] : L(t) \leq r \Big\}.
\]
Setting $r = 0.1$, say, would encode a belief that a good algorithm cannot have FNR greater than this.

Unsurprisingly, it is wasteful to use the uniform bounds of Section \ref{sec:finite-sample-monotone} and \ref{sec:asymptotic-monotone}, which account for coverage violations in all of $[0,1]$, and not just $\cT_r$ which may be significantly smaller. Note that the set $\cT_r$ is unknown, so in practice the user can only restrict themselves to a data-dependent set such as
\[
\widehat{\mathcal{T}}_r \defn \Big\{t \in [0, 1] : \hat L_n(t) \leq r \Big\}
\]
The approach in this section assumes the user has done this. We give bounds that are valid simultaneously for all $t \in \widehat{\mathcal{T}}_r$, rather than all $t \in [0,1]$. We find empirically that they much tighter than the previous bounds.

To develop this approach, we need to extend the notation of the previous section. Define the bootstrap suprema based on the rescaled and normalized bootstrap process 
$\bG_n^\star(t) = \sqrt{n}(\hat L_n^\star(t) - \hat L_n(t))$ on sets $\cT \subset [0, 1]$:
\[
D_{n, \star}^+(\cT) = 
\sup_{t \in \cT} \bG_n^\star(t), 
\quad 
D_{n, \star}^-(\cT) = 
\sup_{t \in \cT} -\bG_n^\star(t), 
\quad \mbox{and} \quad 
D_{n, \star}(\cT) = \twomax{D_{n, \star}^+(\cT)}{D_{n, \star}^{-}(\cT)}.
\]
Our approach begins by fixing a level $r$ and two tolerance parameters $\delta_{\rm glob}, \delta_{\rm loc} \in [0,1]$. We then proceed in three steps: 
\begin{enumerate}
    \item Global estimation: Select a $\delta_{\rm glob}$-bootstrap quantile $\hat q_{\rm glob}$ the two-sided supremum risk over $[0,1]$ satisfying
    \[
    \PP\Big\{D_{n, \star}([0, 1]) > \hat q_{\rm glob} \mid \mathcal{D}_n\Big\} \leq \delta_{\rm glob}. 
    \]
    \item Localization: Form an adjusted empirical sublevel set:
    \[
    \widehat{\mathcal{T}_{r}}^+ \defn 
    \Big\{ t \in [0, 1] :
    \hat L_n(t) \leq r + 2 \frac{\hat q_{\rm glob}}{\sqrt{n}}
    \Big\},
    \]
    containing the original empirical sublevel set $\widehat{\mathcal{T}_{r}}$.
    \item Local estimation: Select a $\delta_{\rm loc}$-bootstrap quantile $\hat q_{\rm loc}$ of the one-sided supremum risk over $\widehat{\mathcal{T}_{r}}^+$ satisfying
    \[
        \PP\Big\{D_{n, \star}^-(\widehat{\mathcal{T}_r}^+) > \hat q_{\rm loc} \mid \mathcal{D}_n\Big\} \leq \delta_{\rm loc},
    \]
    and use it to compute an upper confidence band.
\end{enumerate}

The method can be understood intuitively as follows. The first step computes by
how much the size of the set $\widehat{\mathcal{T}_r}$ should be increased, to 
obtain the corrected set $\widehat{\mathcal{T}}_r^+$ of the 
second step.\footnote{
    To appreciate why such a correction is necessary, consider a fixed set 
    $\cT \subset [0,1]$, such as $\cT = [0,1]$ from Section \ref{sec:asymptotic-monotone}. A confidence band 
    ${\cB(t; \cT) : [0,1] \to 2^{[0,1]}}$ which is \emph{uniformly valid} over $\cT$---in the sense that $L(t) \in \cB(t, \cT)$ for all $t \in \cT$, with high probability---is typically no longer uniformly valid 
    when a random set $\widehat{\cT}$
    is substituted for $\cT$.
}  
The third step essentially carries out the bootstrap quantile estimate from the previous section, but specializing to $\widehat{\mathcal{T}}_r^+$ for tighter bounds. Due
to the correction, the confidence set is valid over the original, smaller set $\widehat{\mathcal{T}_r}$.

The initial two-sided global estimation is key.  
To see why, 
suppose we have a 
confidence set that is valid for a fixed set $\cT$ when specializing over $\cT$. 
Then it is valid for any \emph{subset} of $\cT$ when specializing over any \emph{superset} of
$\cT$, even if these sets are data dependent. Since the two-sided estimation
quantifies how far 
$\hat L_n$ is---both above and below---from the unknown mean $L$, we might plausibly 
find the fixed set $\cT_r$ to be ``sandwiched'' as
$\widehat{\cT}_r \subset \cT_r \subset \widehat{\cT}_r^+$, so that our confidence
set can apply. 

The following theorem gives the precise form of a uniform upper confidence bound for risk control. We refer to its computation as \emph{restricted risk resampling} (RRR), but 
like risk resampling, note that it is a functional form of the bootstrap, though now combined 
with the localization idea of \cite{zrnicLocallySimultaneousInference2023}. 

\begin{theorem}[Confidence bound via restricted risk resampling]\label{thm:locsim_bootstrap_risk}
Let $r \in [0, 1]$. Fix confidence parameters $\delta_{\rm glob}, \delta_{\rm loc} \in [0,1]$ and set $\delta = \delta_{\rm glob} + \delta_{\rm loc}$. 
Then we have, with probability at least $1 -( \delta + o(1))$, 
\begin{equation}\label{eq:locsim-res1}
L(t) \leq \hat L_n(t) + \frac{\hat q_{\rm loc}}{\sqrt{n}} \quad \mbox{simultaneously for all}~t \in \widehat{\mathcal{T}}_{r}.
\end{equation}
\end{theorem}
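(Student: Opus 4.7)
The plan is to implement a ``global then local'' two-step strategy, modeled on \cite{zrnicLocallySimultaneousInference2023}, within the bootstrap framework of Section \ref{sec:asymptotic-monotone}. The argument decomposes into three parts: a deterministic sandwich inclusion on a global good event, an asymptotic validity statement for the local bootstrap quantile over the data-dependent set $\widehat{\cT}_r^+$, and a union bound combining the two.

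Define the global and local good events
\[
E_{\rm glob} \defn \Big\{\sup_{t \in [0,1]} |\hat L_n(t) - L(t)|\sqrt n \leq \hat q_{\rm glob}\Big\}, \qquad E_{\rm loc} \defn \Big\{\sup_{t \in \widehat{\cT}_r^+}(L(t) - \hat L_n(t))\sqrt n \leq \hat q_{\rm loc}\Big\}.
\]
Theorem \ref{thm:fclt} together with the two-sided form of Lemma \ref{lem:bootvalid} applied to $D_{n,\star}([0,1])$ gives $\PP(E_{\rm glob}) \geq 1 - \delta_{\rm glob} - o(1)$. On $E_{\rm glob}$ I then verify the sandwich
\[
\widehat{\cT}_r \subset \{t : L(t) \leq r + \hat q_{\rm glob}/\sqrt n\} \subset \widehat{\cT}_r^+
\]
by direct set inclusion: the first uses $\hat L_n(t) \leq r$ together with $L(t) - \hat L_n(t) \leq \hat q_{\rm glob}/\sqrt n$, while the second uses $L(t) \leq r + \hat q_{\rm glob}/\sqrt n$ together with $\hat L_n(t) - L(t) \leq \hat q_{\rm glob}/\sqrt n$.

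Establishing $\PP(E_{\rm loc}) \geq 1 - \delta_{\rm loc} - o(1)$ is the main technical obstacle: because $\widehat{\cT}_r^+$ is data-dependent, Lemma \ref{lem:bootvalid} does not immediately furnish validity of $\hat q_{\rm loc}$ for the true (non-bootstrap) one-sided supremum. I plan to exploit monotonicity of the loss, which forces $\widehat{\cT}_r^+ = [\hat t^+, 1]$ for a random endpoint $\hat t^+$, and argue that $\hat t^+$ converges in probability to the deterministic endpoint $t^* := \inf\{t : L(t) \leq r\}$, since $\hat q_{\rm glob}/\sqrt n \to 0$ by boundedness of the bootstrap quantile combined with uniform consistency of $\hat L_n$. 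A joint convergence argument, using that the FCLT of Theorem \ref{thm:fclt} extends to the endpoint $\hat t^+$ under Skorokhod topology, then shows the bootstrap and true one-sided suprema over $\widehat{\cT}_r^+$ have the same limiting distribution $\sup_{t \geq t^*}(-\bG(t))$, and quantile consistency of $\hat q_{\rm loc}$ follows from continuity of this limit. The sandwich inclusion from the previous paragraph is what makes this stability argument go through.

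Finally, since $\widehat{\cT}_r \subset \widehat{\cT}_r^+$ (which holds even off $E_{\rm glob}$, by definition of $\widehat{\cT}_r^+$), on $E_{\rm glob} \cap E_{\rm loc}$ we get
\[
\sup_{t \in \widehat{\cT}_r}(L(t) - \hat L_n(t))\sqrt n \leq \sup_{t \in \widehat{\cT}_r^+}(L(t) - \hat L_n(t))\sqrt n \leq \hat q_{\rm loc},
\]
which is the claimed uniform upper bound after dividing by $\sqrt n$. A union bound on the complements of $E_{\rm glob}$ and $E_{\rm loc}$ gives $\PP(E_{\rm glob} \cap E_{\rm loc}) \geq 1 - \delta - o(1)$, completing the argument.
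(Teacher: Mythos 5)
Your skeleton---a two-sided global event, the sandwich $\That \subset \predT \subset \predThat$ on that event, and a final union bound---matches the paper's strategy, and your sandwich verification is exactly the paper's Lemma \ref{lem:keylem}. The gap is in the local step. You define $E_{\rm loc}$ as the event $\sup_{t\in\predThat}\sqrt n\,(L(t)-\hat L_n(t)) \leq \hat q_{\rm loc}$, where both the supremum and the bootstrap quantile are taken over the fully data-dependent set $\predThat$, and you assert $\PP(E_{\rm loc})\geq 1-\delta_{\rm loc}-o(1)$ unconditionally. Two problems. First, since $\That\subset\predThat$ holds deterministically, your final chain of inequalities needs only $E_{\rm loc}$; if the unconditional claim were true you would have proved coverage $1-\delta_{\rm loc}-o(1)$ without ever spending $\delta_{\rm glob}$, which should signal that the claim is too strong. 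Second, the claim is not justified by your stability argument: off the global event, $\predThat$ need not contain $\trueT$ (if $L$ has a flat stretch at level $r$, points of $\trueT$ drop out of $\predThat$ with asymptotically positive probability, and $\hat t^+$ has no deterministic limit), and a bootstrap quantile computed over a set selected with the same data as the supremum it must dominate is precisely the post-selection issue the construction exists to avoid. Your parenthetical fix---``the sandwich is what makes this go through''---means the local step must be run on $E_{\rm glob}$, at which point the plain union bound over $E_{\rm glob}^c$ and $E_{\rm loc}^c$ as you wrote it no longer applies, and the proposed $\hat t^+\to t^*$ plus Skorokhod argument still needs regularity of $L$ at level $r$ that the theorem does not assume.

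The paper's proof never controls the supremum over $\predThat$. Lemma \ref{lem:keylem} shows that on the global event, validity of the band of width $\hatqloc{\predT}/\sqrt n$ over $\predT$ implies validity of the band of width $\hatqloc{\predThat}/\sqrt n$ over $\That$, using the sandwich together with monotonicity of $\cT\mapsto\hatqloc{\cT}$. The local failure probability is then computed for the pair $\bigl(\sup_{t\in\predT}\sqrt n(L(t)-\hat L_n(t)),\ \hatqloc{\predT}\bigr)$: because $\predT=\{t: L(t)\leq r+\hatqglob/\sqrt n\}$ is random only through the scalar $\hatqglob$ and deterministically contains $\trueT$, Lemma \ref{lem:bootqtl} and Slutsky give convergence to the deterministic pair $\bigl(\sup_{t\in\trueT}(-\bG(t)),\ \qloc{\trueT}\bigr)$ with no assumption on the behavior of $L$ near the level $r$. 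To repair your argument, restructure the local step around $\predT$ and the quantile monotonicity $\hatqloc{\predT}\leq\hatqloc{\predThat}$ (valid on the global event), rather than around $\predThat$ directly.
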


\noindent See Appendix~\ref{app:zrnicproof} for a proof of this result.

Let us make a few remarks. First, regarding implementation: we use Monte Carlo approximation to compute the quantiles $\hat q_{\rm loc}, \hat q_{\rm glob}$, similarly to Section \ref{sec:asymptotic-monotone}, and we choose a ratio of $\delta_{\rm loc} / \delta_{\rm glob} = 9$ for the confidence parameters in our experiments.

Second, note that the theorem claims validity over the observed data-dependent
set $\That$, rather than the population set $\trueT$. The former is arguably more useful in applications, as $\trueT$ is not observed, but a guarantee involving the population set is possible with minor modifications. Specifically, if the procedure is run with the level $r' = r - {\hat q_{\rm glob}}/{\sqrt{n}}$, then in addition to the conclusion of Theorem \ref{thm:locsim_bootstrap_risk}, we additionally have the inclusion $\widehat{\mathcal{T}}_{r'} \subset \cT_r$ with high probability, asymptotically.

Finally, regarding motivation: our goal was to spend the error budget less wastefully, when the user prefers parameters $t$ such that $\hat L_n(t)$ is small. (By monotonicity, this means that $t$ is large). We note that in past work, this concern has been addressed differently, using bounds which are valid simultaneously for all $t$, but which have \emph{variable width}: tighter for large $t$, looser for small $t$. The bound in this section, in contrast, is \emph{fixed width}, but can still be tighter for large $t$, as it need not be valid when $t$ is small. 

One seemingly plausible approach to variable-width upper bounds in our setting is inspired by the Monte Carlo method of \cite{bates2023testing}: for some function $f(t; \gamma)$, corresponding to the width of the bound, which is decreasing in $t$ and increasing in $\gamma$, define $\Lhatp_n(t) := \hat L_n(t) + n^{-1/2} f(t; \hat \gamma)$, where $\hat \gamma$ satisfies $\PP(\forall t, \; \bG^\star_n(t) \leq \hat f(t; \hat \gamma) \mid \cD_n ) \geq 1 - \delta$. Observe that a fixed-width bound corresponds to $f(t, \gamma) = \gamma$.

We do not pursue this further in the present work, but note that the present approach selects a set $\That$,
rather than a function $f(t; \gamma)$. This is advantageous when the risk $L(t)$, and hence the set $\That$, is more interpretable than the parameter $t$.  

\subsection{Extension to non-monotone losses}\label{sec:beyondmono}

Our previous concentration inequalities and upper confidence bounds only apply to population risks arising as expectations of monotone losses. In this section, we briefly discuss two extensions that we can accommodate that involve losses which are
not monotone. 

\subsubsection{Combinations of monotone risks}
In many situations, the risk that we would like to control can be decomposed into multiple monotone components. Formally, suppose we are interested in 
controlling the composition of $k$ different risks,   
\[
L(t) \defn \Psi\Big(L_1(t), \dots, L_k(t)\Big),
\quad \mbox{where} \quad 
L_i(t) \defn \EE[\ell_i(\cC_t(X), Y)], 
\]
for $i = 1, \ldots, k$. We assume for simplicity that $\ell_i$ have range
in $[0, 1]$ and are monotone in the sense of display~\eqref{eq:monotonicity}, 
but the overall function $\Psi \colon [0, 1]^k \to [0, 1]$ may possibly be non-monotone.

\paragraph{General approach:} To obtain a upper confidence bound on $L(t)$, we simply combine the uniform lower and upper confidence bounds for each of the components $L_i$, in the 
following two steps.
\begin{enumerate} 
\item Develop a $1 - \delta_{n, i}$ confidence band $\widehat{\mathcal{C}_i}$ for each $i$, 
such that 
\[
L_i(t) \in \widehat{\mathcal{C}_i}(t) \quad \mbox{simultaneously for all}~t \in \mathcal{T}, 
\]
holds with probability at least $1 - \delta_{n, i}$. 
\item 
Aggregate the confidence parameters and confidence sets by defining
\[
\delta = \sum_{i=1}^k \delta_{n, i}, 
\quad 
\widehat{C}_{\rm low}(t) \defn 
\inf_{\ell_i \in \widehat{\mathcal{C}_i}(t)} 
\Psi(\ell_1, \dots, \ell_k),
\quad \mbox{and} \quad 
\widehat{C}_{\rm up}(t) \defn \sup_{\ell_i \in \widehat{\mathcal{C}_i}(t)} 
\Psi(\ell_1, \dots, \ell_k).
\] 
\end{enumerate} 
Clearly, we have with probability at least $1 - \delta$ that 
\[
\widehat{C}_{\rm low}(t) 
\leq L(t) \leq 
\widehat{C}_{\rm up}(t), \quad \mbox{simultaneously for all}~t \in \mathcal{T}.
\]
For the first step in the above approach, we can apply any of our previously described confidence bounds since the component risks $\{L_i\}$ are bounded and monotone. 

\paragraph{Illustration for selective classification:} 
Consider the case where $L(t) = L_1(t)/L_2(t)$, which is a special case of the above
approach, having taken $L = \Psi(L_1, L_2)$ and $\Psi(\ell_1, \ell_2) = \ell_1/\ell_2$. The two risks can be define to capture a tradeoff, or may arise directly from the specification of $L$. 

This setup arises specifically in \emph{selective classification}, also known as 
classification with abstention, or classification with a reject option~\cite{chow1957optimum,chow1970optimum,cortes2016learning}. 
In this setting, we wish to classify covariates $x$ as falling into one of $K$ classes or, alternatively, we can \emph{abstain}, for instance, if we believe we are too uncertain to commit to a point prediction.  In this case, let $\cY' = \{1, \dots, K\} \cup \texttt{abstain}$. A classifier $\cC_t: \cX \to \cY'$ can work on top of learned scores $\hat p(X) \in \Delta^K$ in the $K$-simplex as follows. Denoting $k^*(x) = \argmax_k \hat p_k(x)$, it returns the highest scoring class unless a score threshold $t$ is not reached:
\[
    \cC_t(X) = \begin{cases}
                     k^*(X) & \hat p_{k^*(X)}(X) > t \\
                    \texttt{abstain} & \text{otherwise}.
                \end{cases}
\]
Then the following risk $L$, which is the probability of misclassification given that $\cC_t$ did not abstain, can be upper bounded with high probability by upper bounding the numerator and lower bounding the denominator:
\[
    L(t) = \PP[\cC_t(X) \neq Y \mid \cC_t(X) \neq \texttt{abstain}] = \frac{\PP[\cC_t(X) \neq Y, \cC_t(X) \neq \texttt{abstain}] }{\PP[\cC_t(X) \neq \texttt{abstain}]}.
\]

\subsubsection{Nearly monotone risks} 
Now we consider the case where we have a risk $L(t) = \EE[\ell(\cC_t(X), Y)]$ which is the expectation of a non-monotone loss $\ell$. 
The following approach will allow us to provide meaningful risk control when $L$ is ``nearly'' monotone. 

\paragraph{Monotonizing the loss:} 
Our approach is to \emph{monotonize the loss}. 
Formally, we define the functions
\[
\ell^\downarrow(\cC_t(X), Y) 
\defn 
\inf_{s \leq t} \ell(\cC_s(X), Y) 
\quad \mbox{and} \quad 
\ell^\uparrow(\cC_t(X), Y) 
\defn 
\sup_{s \leq t} \ell(\cC_s(X), Y). 
\] 
The functions $\ell^\downarrow$, $\ell^\uparrow$ are essentially the running minimum and maximum, respectively, over the set $\{s \leq t\}$. 
We define 
\[
L^\downarrow(t) \defn 
\EE[\ell^\downarrow(\cC_t(X), Y)] 
\quad \mbox{and} \quad 
L^\uparrow(t) \defn 
\EE[\ell^\uparrow(\cC_t(X), Y)].
\]
Since $\ell^\downarrow \leq \ell \leq 
\ell^\uparrow$, we also have 
$L^\downarrow \leq L \leq L^\uparrow$.

If the loss $\ell$ is bounded, then the functions $L^\downarrow, L^\uparrow$ satisfy the monotonicity assumptions needed to develop our lower and upper confidence bounds. In particular, we define the monotonized empirical risks,
\[
\hat L_n^\downarrow(t) \defn 
\frac{1}{n} \sum_{i=1}^n 
\ell^\downarrow(\cC_t(X), Y) \quad 
\mbox{and} \quad 
\hat L_n^\uparrow(t) \defn 
\frac{1}{n} \sum_{i=1}^n 
\ell^\uparrow(\cC_t(X), Y). 
\]
Then, using $\hat L_n^\downarrow, \hat L_n^\uparrow$, we can develop, respectively, simultaneous lower and upper confidence bounds on the risk $L$ using the inequalities developed in the previous sections. 

\paragraph{Batch-and-monotonize:} 
One concern that we may have with the approach developed above is that even when $L$ is close to monotone, the loss $\ell$ may be far from monotone, resulting in a larger than desired  gap between the population risks $L$ and the monotonized variants $L^\downarrow, L^\uparrow$. A way to address this is to \emph{batch} the samples, and then monotonize on these individual batches.

Specifically, using $k$ data points at a time can get tighter bounds. Assuming for simplicity that $n$ is divisible by $k$, we define for $j \in \{0, 1, \dots, n/k-1\}$ the dataset and loss
\[
Z_j = \{(X_{kj + i},Y_{kj + i})\}_{i=1}^k 
\quad \mbox{and} \quad 
\ell_k(t, Z_j) =  \frac{1}{k} \sum_{i = 1}^{k} \ell(\cC_t(X_{kj + i}), Y_{kj + i}).
\] 
We can then monotonize $\ell_k$ as described in the previous paragraph, and use this to develop simultaneous lower and upper confidence bounds on the population risk. At first glance this may appear to be lossy because there are only $n/k$ data points $Z_j$; however, values of $\ell_k$ can be expected to have lower variance than $\ell$, so variance-aware methods such as Corollary \ref{cor:bootstrap_risk} and Theorem \ref{thm:locsim_bootstrap_risk} will adapt.

\section{Experiments and Examples}\label{sec:experiments}

In this section, we will demonstrate the performance of the upper bounds of Section \ref{sec:method}:
the nonasymptotic bound (NASM),\footnote{Corollary \ref{cor:finite-sample}, 
right-hand side of Equation \eqref{eq:bdkw-ctrl-UB}.}
risk resampling (RR),\footnote{Corollary \ref{cor:bootstrap_risk}, 
right-hand side of Equation \eqref{eq:bootstrap-UB}.} 
and restricted risk resampling (RRR).\footnote{Theorem \ref{thm:locsim_bootstrap_risk}, 
right-hand side of Equation \eqref{eq:locsim-res1}.}. Each method was run with confidence parameter $\delta = 0.1$; the RR and RRR methods are run with 1000 bootstrap resamples, and the RRR method was run with risk tolerance $r = 0.1$ and global and local parameters $\delta_\textup{glob} = 0.01, \delta_\textup{loc} = 0.09$.  

These methods each amount to different ways to compute a uniform upper bound, which we
denote generically as $\Lhatp(t)$. We investigate two settings: a fully simulated setting, 
as well as the MS COCO setting from the Introduction.  

Since these bounds are \textit{fixed-width} bounds of the form
$\Lhatp(t) = \hat L_n(t) + \hat q / \sqrt{n}$, they must satisfy
\[
    \PP\left(\sup_t L(t) - \hat L_n(t) \leq \frac{\hat q}{\sqrt{n}}\right) \geq 1 - \delta.
\]
Hence in each setting we provide a quantile plot of $\hat q / \sqrt{n}$ against the true $1 - \delta$ quantile of $D_n = \sup_t L(t) - \hat L_n(t)$, which constitutes the best possible fixed-width bound. 

We also display miscoverage metrics. Call the quantity $\PP(\exists t \text{ s.t. } L(t) > \Lhatp(t))$ the \emph{anywhere miscoverage probability}. Additionally, for the selected set $\hat S = \{t : \hat L_n(t) \leq 0.1\}$, call the quantity ${\PP(\exists t \in \hat S \text{ s.t. } L(t) > \Lhatp(t))}$ the \emph{selected set miscoverage probability}; we plot these two quantities in bar charts.

Though the resampling-based bounds are asymptotically exact, 
their use may seem unreasonable if they are very wide. 
Hence, on the MS COCO example, 
we model the behavior of an analyst who trades off two risks. Choosing the parameter $\hat t$ as a function of the data, call $\EE[\Lhatp(\hat t) - L( \hat t)]$ the \emph{average conservatism}; we also plot this in a bar chart. For details on the specific function $\hat t$, see Appendix \ref{app:lossrisk}.

In each setting, we consider multiple different loss functions, and
plot the results for each method applied on each loss. In
addition, we include a fourth method,\footnote{Theorem \ref{thm:wsr} 
Equation \eqref{eq:wsr-def}.}, referred to as ``pointwise,'' 
which is not uniformly valid, but only pointwise valid in the sense that 
$\PP(L(t) \leq \Lhatp(t)) \geq 1 - \delta$ for every $t$ and finite $n$. This method,
due to \cite{waudby-smithEstimatingMeansBounded2023}, gives remarkably tight 
estimates of means of bounded random variables, so we display it
as a benchmark against our methods which are uniformly valid.

Lastly, we note that these probabilities and expectations cannot be computed exactly on finite data, so in the MS COCO example we must compute surrogates based on splitting our datasets in halves into a holdout set $\cH$ and a sampling set $\cS$; refer to Section \ref{app:details_violation} for details. Additionally, the suprema and miscoverage quantities are computed for $t$ in a grid; for the Gaussian example, it is a grid on $[-3,3]$ with size $1000$, and for MS COCO it is a grid on $[0,1]$ of size $500$. 

Replication code can be found at \texttt{github.com/drewtnguyen/risk-tradeoffs-experiments}.

\subsection{Simulated data}

To define a completely synthetic monotone loss function, consider empirical CDFs on batches of data. Let $Z_1, \dots, Z_n$ be i.i.d., where each $Z_i$ is a batch of five equi-correlated Gaussians, having covariance with diagonal values $1$ and off-diagonals $\rho \in [-1, 1]$:
\[
    Z_i = (X_{i1}, \dots, X_{i5}) \sim \cN_5(0, \rho \mathbbm{1} + (1 - \rho) \mathbbm{I}).
\]
Define the loss $\ell(t, Z_i)$ as
\[
    \ell(t, Z_i) = \sum_{j = 1}^5 1\{X_{ij} \leq t\}
\]
Evidently the risk is just the standard normal CDF $L(t) = \Phi(t)$, even though varying $\rho$ constitutes different loss distributions (we choose $\rho \in \{-0.2, 0.2, 0.6\}$ in the experiments). Note that if $\rho = 0$, uniform upper bounds on $L$ could be obtained by standard arguments for CDFs. 

\begin{figure}
    \centering
    \includegraphics[width=\textwidth]{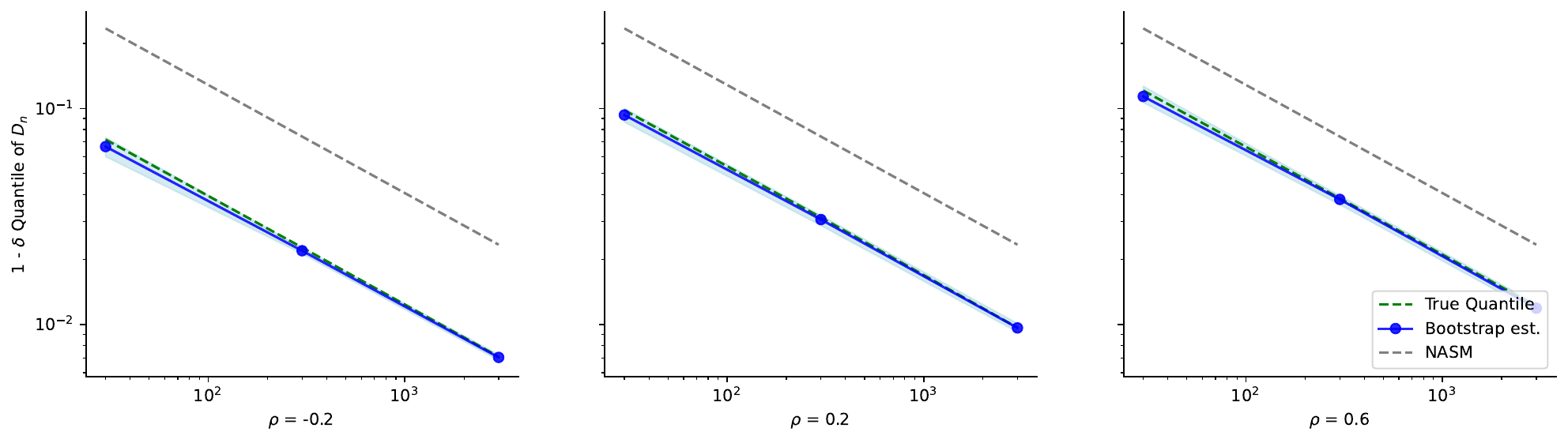}
    \caption{Simulated data: Log-log plot of true quantile of $D_n$ and its median bootstrap estimate (and 90/10\% quantiles) computed from 3K Monte Carlo runs.}\label{fig:gaussian_quantiles}
\end{figure}

\begin{figure}
    \centering
    \includegraphics[width=\textwidth]{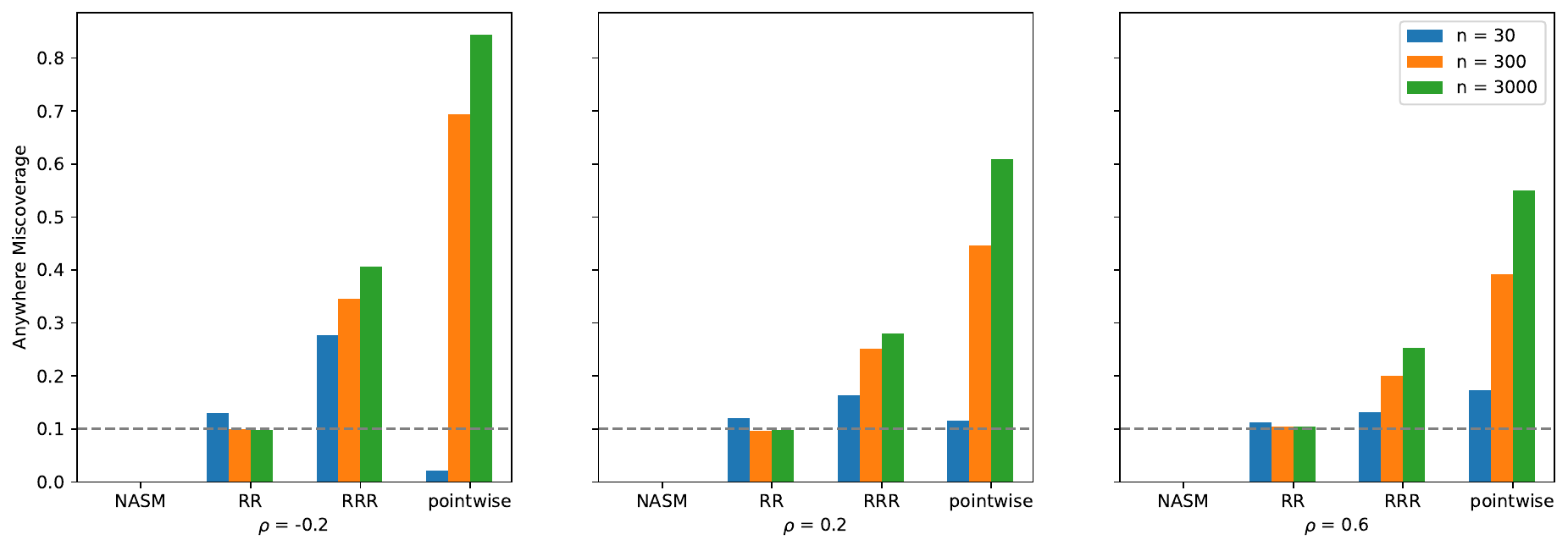}
    \caption{Simulated data: Anywhere miscoverage based on 20K Monte Carlo runs.}\label{fig:anywhere_gaussian}
\end{figure}

\begin{figure}
    \centering
    \includegraphics[width=\textwidth]{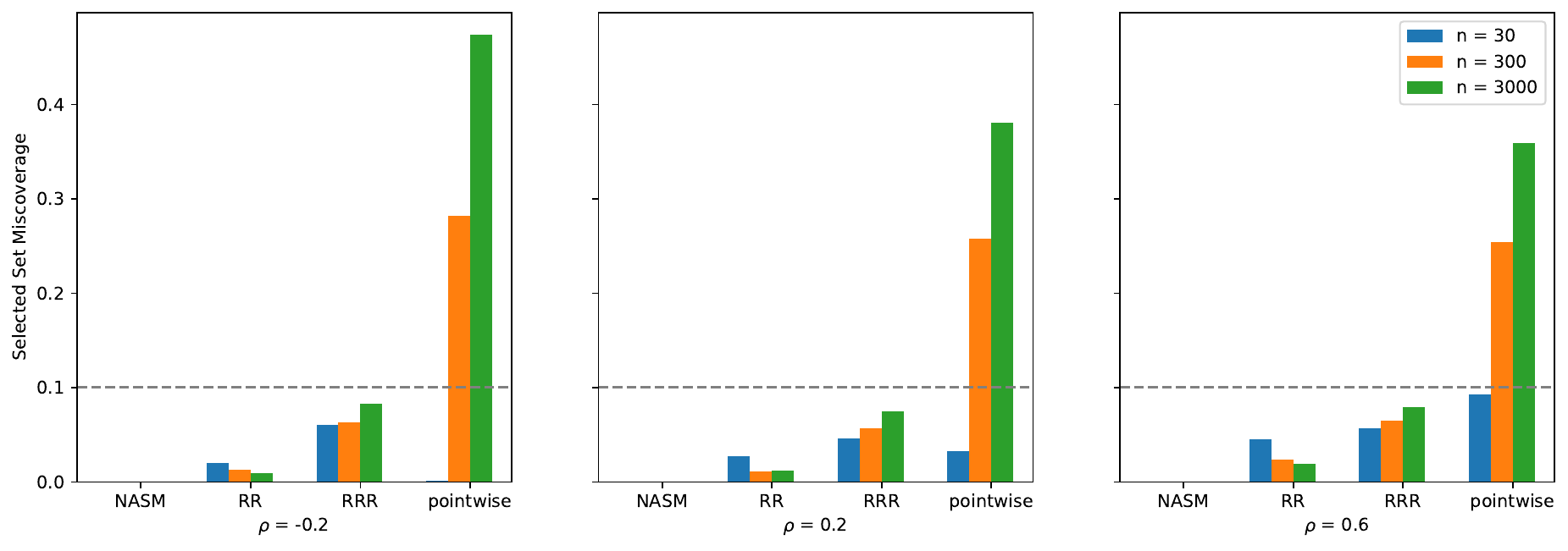}
    \caption{Simulated data: Selected set miscoverage based on 20K Monte Carlo runs.}\label{fig:selected_gaussian}
\end{figure}

The quantile plot of Figure~\ref{fig:gaussian_quantiles} shows the nonasymptotic upper bound, as well as convergence of the estimated bootstrap quantile (which can be predicted from Lemma \ref{lem:bootvalid}). The plots show that the nonasymptotic bound is, as predicted, valid for all sample sizes, but is very conservative; RR and RRR work for moderate sample sizes; and the pointwise bound is not 
uniformly valid. 

Note that the convergence of the bootstrap appears to be slower for small $\rho$, which is when the underlying process is closest to its mean. More generally, constant pre-factors in the convergence rate may depend on the problem setting. 

\subsection{MS COCO}

We revisit multi-label classification on the MS COCO data set that was discussed in the Introduction. The results are presented in Figures \ref{fig:mscoco_quantiles}-\ref{fig:conservatism_mscoco}, for four different multi-label classification risks of interest: FNR, FPR, FDR, and SetSize. 

The FNR and FPR were defined in the Introduction; the false discovery rate (FDR) 
is the expectation of the number of false positives over selected classes, 
while SetSize is the expectation of the normalized number of selected classes. 
For precise definitions and illustrations of these risks, see Appendix \ref{app:lossrisk}.

We can interpret control of the FDR as a guarantee that the selected classes are mostly true positives on average. Note that it is not a monotone risk, 
so we monotonize 
it as described in Section \ref{sec:beyondmono}. 

The figures are qualitatively similar to those of the previous section. Again, the nonasymptotic bound is extremely conservative and the pointwise baseline does not
have the right coverage, while resampling-based methods are valid at moderate sample sizes and are quite effective. In particular, Figure \ref{fig:conservatism_mscoco}, measuring the average tradeoff conservatism, shows that the RRR method does better than RR in terms of tightness of the bound, and compares favorably to the method that is pointwise valid.

\begin{figure}
    \centering
    \includegraphics[width=\textwidth]{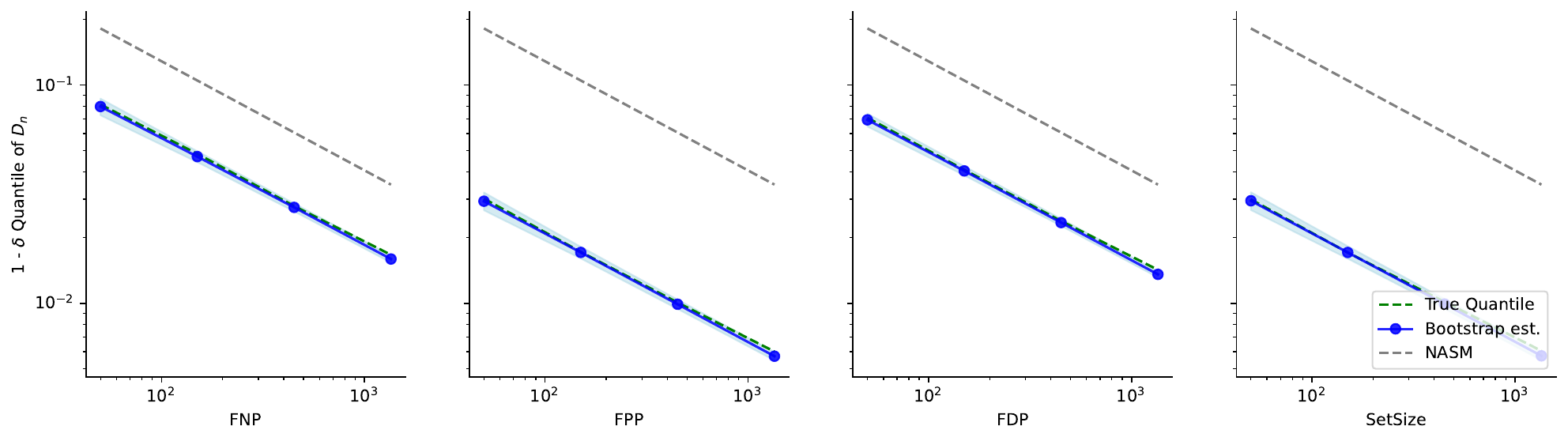}
    \caption{MS COCO data: Log-log plot of true quantile of $D_n$ and its median bootstrap estimate (and 90/10\% quantiles) computed from 3K Monte Carlo runs.}\label{fig:mscoco_quantiles}
\end{figure}

\begin{figure}
    \centering
    \includegraphics[width=\textwidth]{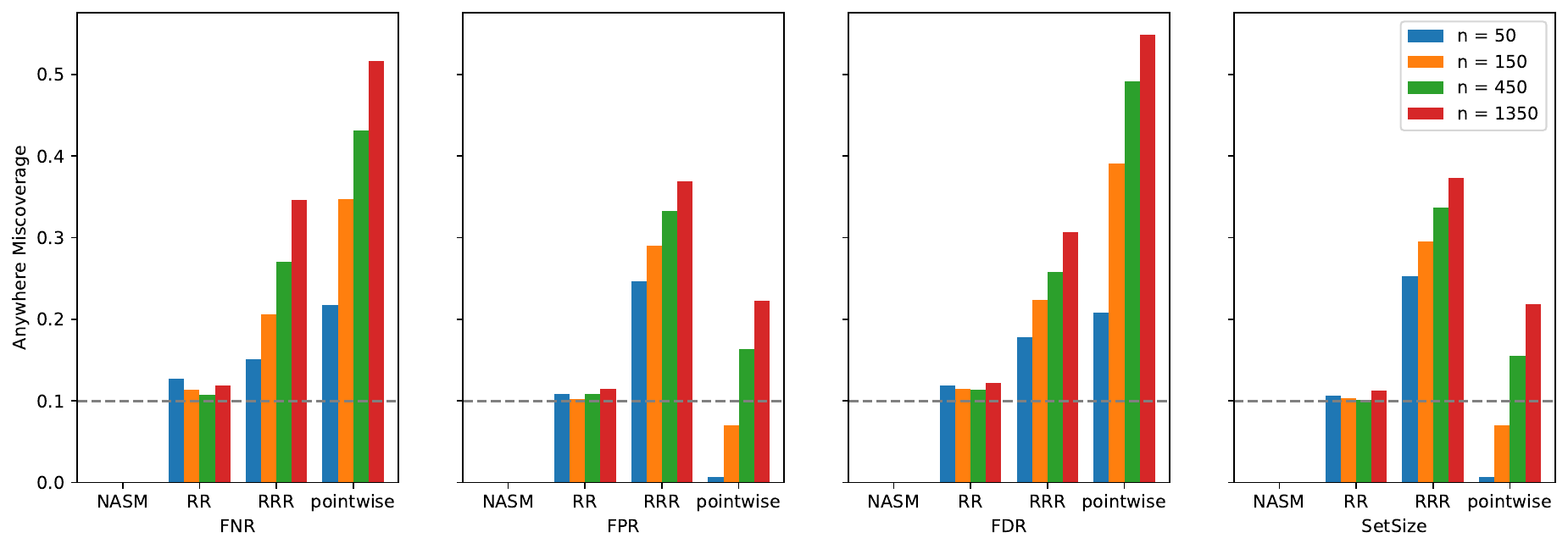}
    \caption{MS COCO: Anywhere miscoverage based on 20K Monte Carlo runs. (Miscoverage for larger $n$ is an artifact of having a finite population---see Appendix \ref{app:details_violation}).}\label{fig:anywhere_mscoco}
\end{figure}

\begin{figure}
    \centering
    \includegraphics[width=\textwidth]{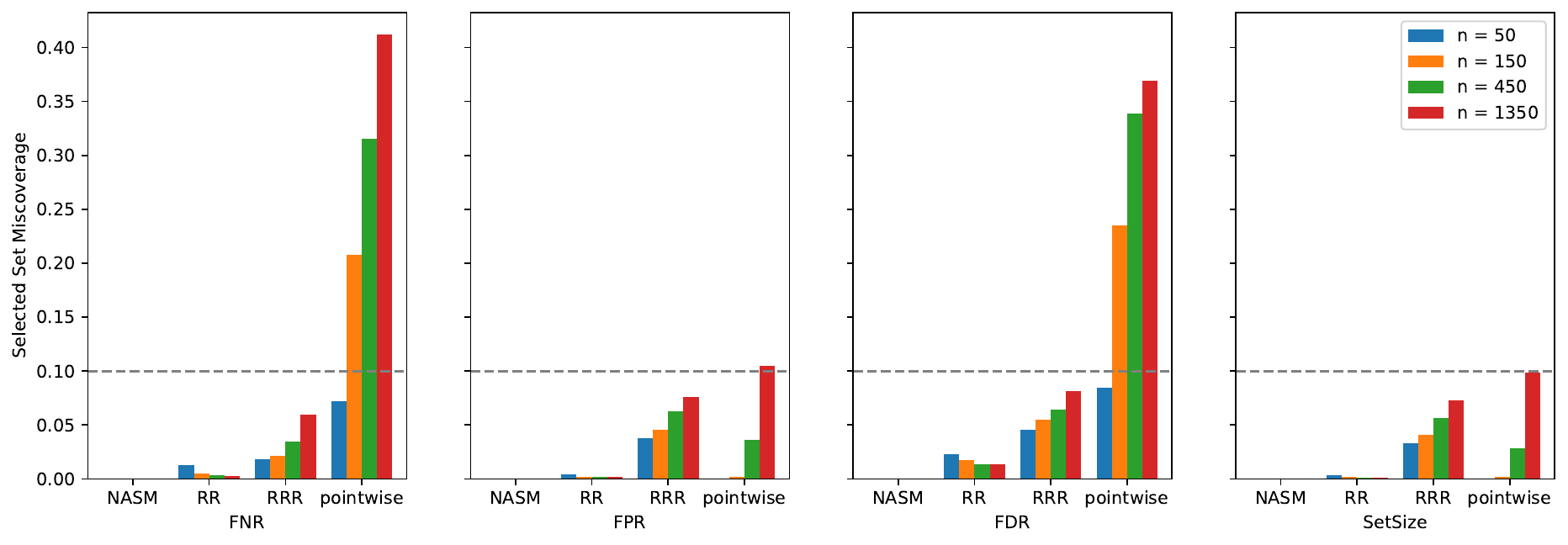}
    \caption{MS COCO: Selected set miscoverage based on 20K Monte Carlo runs.}\label{fig:selected_mscoco}
\end{figure}

\begin{figure}
    \centering
    \includegraphics[width=\textwidth]{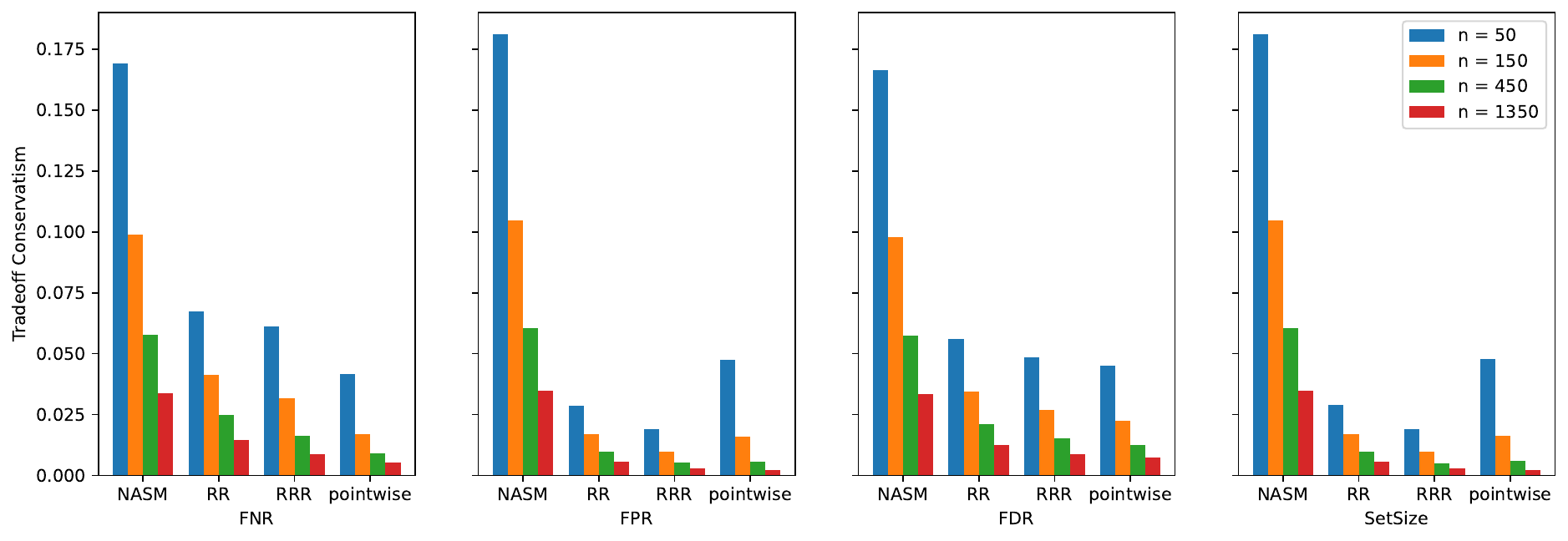}
    \caption{MS COCO: Tradeoff conservatism based on 20K Monte Carlo runs.}\label{fig:conservatism_mscoco}
\end{figure}

\section{Theoretical Underpinnings}\label{sec:bounds}

This section is a short foray into the empirical process theory that underlies the main results of the
paper. First, in Section~\ref{sec:VC-theory} we compute the Vapnik-Chervonenkis (VC) dimension of a class of monotonically-indexed functions. In Section~\ref{sec:proof-main}, 
we prove the main results, Theorems~\ref{thm:nonasymp_risk} and~\ref{thm:fclt}, and sketch the 
proof of the technical Lemma $\ref{lem:bootvalid}$ regarding the bootstrap. Careful proofs for all results can be found in Appendix~\ref{app:proofs}.

\subsection{Monotonically-indexed function classes} \label{sec:VC-theory}

Because our results hold even without reference to the risk control problem studied earlier in the paper, we adopt new notation that reflects the underlying empirical process that we are tasked with controlling.  

Let $\cZ$ be a probability space on which we define i.i.d.\ random variables $Z_1, \dots, Z_n$. 
We begin by introducing the following notion of monotonically-indexed function classes.

\paragraph{Monotonically-indexed function class:} 
We say the collection of functions $\cF = \{f_t \colon \cZ \to \RR\}_{t \in [0, 1]^K}$ is 
\emph{monotonically-indexed} if 
\[
t \preceq s \quad \mbox{implies} \quad 
f_t(z) \leq f_s(z),~\mbox{for any}~z\in \cZ.
\]
Here, the inequality $t \preceq s$ corresponds to the sequence of componentwise inequalities $t_i \leq s_i$ for $i = 1, \ldots, K$.
It does not matter what the space $\cZ$ is over which the functions $f_t$ are defined, as the monotonicity provides enough structure to restrict the complexity of the class. 

A monotonically-indexed class $\cF$ should not be confused with a class $\cF$ of monotone functions, which is the class such that whenever $f \in \cF$, then $z_1 \preceq z_2$ implies $f(z_1) \leq f(z_2)$. This latter setting has been studied classically; see, for instance, \cite{dehardtGeneralizationsGlivenkoCantelliTheorem1971}.

Our first result shows that monotonically-indexed function classes have small VC subgraph dimension, as defined in~\cite{vandervaartAsymptoticStatistics1998}. 

\begin{prop}\label{prop:vc} If $\cF$ is monotonically-indexed, then its VC (subgraph) dimension is at most $K + 1$. 
\end{prop}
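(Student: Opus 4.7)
My plan is to argue by contradiction: suppose there is a set $\{(z_i, r_i)\}_{i=1}^{K+2} \subset \cZ \times \RR$ shattered by the subgraph class $\{G_t : t \in [0,1]^K\}$, and derive a contradiction from the $K$-dimensionality of the parameter space.

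The first step is to recast shattering as a combinatorial statement about upper sets. For each $i$, the set $A_i := \{t \in [0,1]^K : f_t(z_i) > r_i\}$ is an upper set under the coordinatewise order $\preceq$, by the monotonicity hypothesis on $\cF$. Shattering is equivalent to the non-emptiness of all $2^{K+2}$ Boolean cells $\bigcap_{i \in S} A_i \cap \bigcap_{j \notin S} A_j^c$ for $S \subseteq [K+2]$. Picking a witness $t_S$ in each cell, the monotonicity of the pattern map $t \mapsto (\mathbbm{1}\{t \in A_i\})_i$ gives the key order-reflecting property: $t_S \preceq t_T$ implies $S \subseteq T$.

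The second step is the main dimensional argument, focused on the $K+2$ ``co-singleton'' witnesses $t^{(i)} := t_{[K+2]\setminus\{i\}}$, each satisfying $t^{(i)} \in A_j$ for all $j \neq i$ but $t^{(i)} \notin A_i$. In the special case where every $A_i$ is \emph{principal}, i.e., equal to the upper closure of a single point $p_i$, a short pigeonhole argument proceeds as follows: the condition $t^{(i)} \succeq p_j$ for all $j \neq i$ combined with $t^{(i)} \not\succeq p_i$ forces the existence of some coordinate $k_i$ for which $p_i$ is the strict coordinate-maximizer among $p_1, \ldots, p_{K+2}$; distinctness of the $k_i$ then yields $K+2 \leq K$, a contradiction that is actually stronger than required.

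The main obstacle is that upper sets in $[0,1]^K$ need not be principal: each $A_i$ may be the upper closure of an antichain $M_i$ with several minimal elements, in which case the coordinate $k$ witnessing $t^{(i)} \not\succeq q$ may depend on the choice of $q \in M_i$, breaking the clean pigeonhole and forcing the weaker bound $K+1$ rather than $K$. To close the argument I would either reduce to the principal case by replacing each $A_i$ with a suitably chosen principal hull that preserves all $2^{K+2}$ Boolean cells, or I would reformulate the pigeonhole in $K+1$ effective dimensions in which one extra coordinate absorbs the antichain slack. The delicate part is carrying out this reduction without collapsing any of the Boolean cells; I expect that the extra degree of freedom introduced by non-principality is precisely responsible for the ``$+1$'' in the stated bound, which my earlier construction in $K=2$ (three non-principal upper sets shattering $2^3$ cells) shows is tight.
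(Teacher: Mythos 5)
Your diagnosis of where the difficulty lies is exactly right, and it is in fact the point at which the paper's own argument silently breaks. The paper's proof defines $t^*_{j,k}$ as the coordinatewise infimum over the set $U_k = \{t : p_k \in S_t\}$ and then infers from ``$t_j > t^*_{j,k'}$ for all $j$'' that $p_{k'} \in S_t$; this inference is valid only when $U_{k'}$ is \emph{principal}, i.e.\ the upper orthant of the single point $(t^*_{1,k'},\dots,t^*_{K,k'})$, which is precisely the non-principality obstacle you isolate. Your principal-case pigeonhole is a clean, correct version of that argument (and, as you observe, it shows that orthant-type upper sets cannot even shatter $K+1$ points). One convention remark: the paper argues over $K+1$ points because it follows the van der Vaart/Kosorok ``VC index'' convention, while you use $K+2$ points under the usual largest-shattered-set convention; this does not affect the substance.

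However, neither of your proposed repairs can succeed, because the proposition is false for every $K \geq 2$. Your $K=2$ example with three non-principal upper sets does not merely show tightness of $K+1$ --- it extends to any number $m$ of sets. Concretely, place the $2^m$ points $q_S$, indexed by $S \subseteq \{1,\dots,m\}$, on the antidiagonal of $[0,1]^2$ so that they form an antichain, and let $A_i$ be the upper closure of $\{q_S : i \in S\}$. Pairwise incomparability gives $q_T \in A_i$ if and only if $i \in T$, so all $2^m$ Boolean cells are nonempty; setting $f_t(z_i) = \mathbf{1}\{t \in A_i\}$ over an $m$-point space $\cZ$ yields a monotonically-indexed class with $K=2$ whose subgraphs shatter $m$ points for every $m$, i.e.\ infinite VC subgraph dimension. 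No principal hull can preserve these cells, and no reformulation in $K+1$ effective coordinates can rescue a false statement. The claim (and both proofs) are correct only for $K=1$, where the subgraphs form a chain under inclusion and hence cannot shatter two points; fortunately, $K=1$ is the only case actually invoked downstream, in Theorems \ref{thm:nonasymp_risk} and \ref{thm:fclt}.
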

\noindent See Appendix~\ref{app:lempropproof} for a proof of this claim. Our proof is inspired by Lemma 9.10 in~\cite{kosorokIntroductionEmpiricalProcesses2008a}, which provides a proof in the case $K = 1$.

Finiteness of the VC dimension allows us to easily derive the following consequence, which is a functional central limit theorem (CLT) for monotonically-indexed classes. To state the result, we consider 
the following rescaled and centered process: 
\begin{equation}\label{eqn:centered-process}
\bG_n(t) \defn \frac{1}{\sqrt{n}} \Big(\sum_{i=1}^n (f_t(Z_i) - \EE f_t(Z_i)) \Big).  
\end{equation}
\begin{prop}\label{prop:converge} 
Let $\cF$ denote a monotonically-indexed function class. If the elements of $\cF$ are right-continuous and uniformly bounded, in the sense that
\[
\sup_{f \in \cF} \sup_{z \in \cZ} |f(z)| < \infty,
\]
then the process $\bG_n$ converges in distribution to a centered Gaussian process $\bG$ with covariance kernel 
\[ 
C(t,s) = \EE[f_t(Z) f_s(Z)] - \EE[f_t(Z)]\EE[f_s(Z)].
\]
\end{prop}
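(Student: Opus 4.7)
The plan is to invoke a standard functional central limit theorem for VC subgraph classes, namely that a uniformly bounded VC subgraph class is Donsker (see, e.g., Theorem 19.14 of \cite{vandervaartAsymptoticStatistics1998} or Section 2.6 of \cite{kosorokIntroductionEmpiricalProcesses2008a}). The two main hypotheses to verify are (i) a complexity bound in terms of uniform covering numbers, and (ii) appropriate measurability of the envelope and of suprema indexed by $t$.

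First I would control the complexity of $\cF$. By Proposition \ref{prop:vc}, the class $\cF$ has VC subgraph dimension at most $K+1$, and by uniform boundedness a constant function $F \equiv \sup_{f \in \cF} \sup_z |f(z)|$ serves as a square-integrable envelope. A classical bound (e.g., Theorem 2.6.7 of \cite{kosorokIntroductionEmpiricalProcesses2008a}) then gives that the uniform covering numbers of $\cF$ with envelope $F$ satisfy a polynomial bound in $1/\epsilon$, which in particular implies the uniform entropy integral is finite. This is exactly the hypothesis needed for $\cF$ to be $P$-Donsker.

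The second step is to handle measurability. The standard subtlety is that $\sup_{t} |\bG_n(t)|$ is not a priori measurable for an uncountable index set. Here I would use the right-continuity hypothesis: since each $f_t$ is right-continuous in $t$ and the index set $[0,1]^K$ admits the countable dense subset $\QQ^K \cap [0,1]^K$ whose upper-closures recover all values of $f_t(z)$ by a limit from the right, the class $\cF$ is pointwise separable. Consequently all suprema reduce to countable suprema over this dense subset and the usual pointwise-measurable framework applies, so the weak convergence conclusion holds in $\ell^\infty([0,1]^K)$ in the sense of Hoffmann-J\o{}rgensen.

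Combining these two ingredients yields $\bG_n \rightsquigarrow \bG$, where $\bG$ is a tight, centered Gaussian process on $[0,1]^K$. The covariance kernel follows from the ordinary (finite-dimensional) multivariate CLT applied to the vector $(f_t(Z), f_s(Z))$: for each fixed pair $(t, s)$,
\[
\EE[\bG_n(t)\bG_n(s)] = \EE[f_t(Z)f_s(Z)] - \EE[f_t(Z)]\EE[f_s(Z)] = C(t,s),
\]
which transfers to the limit by the weak convergence established above. The main obstacle is really step two, the measurability issue: this is where the right-continuity assumption earns its keep, by reducing the index set to a countable dense subset without loss.
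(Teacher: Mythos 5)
Your proposal is correct and follows essentially the same route as the paper's own proof: both invoke Proposition \ref{prop:vc} to bound the VC subgraph dimension, appeal to the standard VC-implies-Donsker result (Theorem 19.14 and Lemma 19.15 of van der Vaart), and use right-continuity together with the rational index points to dispose of the measurability condition. Your additional remarks on the covering-number bound and the identification of the covariance kernel via the finite-dimensional CLT are correct elaborations of steps the paper leaves implicit.
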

\noindent 
This result is almost an immediate consequence of our Proposition~\ref{prop:vc}, Theorem 19.14, and Lemma 19.15 of \cite{vandervaartAsymptoticStatistics1998}; for details, 
see Appendix \ref{app:lempropproof}. 

\paragraph{Comparing general monotone functions to indicators:} One interesting interpretation of 
Proposition~\ref{prop:converge} arises in the case $K = 1$. Let $\cF$ be a monotonically-indexed function 
class; let $\bG$ and $C$ denote the limit process and kernel associated with $\bG_n$ as guaranteed by 
Proposition~\ref{prop:converge}. Additionally, suppose for simplicity that $F(t) = \EE f_t(Z)$ also satisfies 
$F(0) = 0$ and $F(1) = 1$ (the general case can be reduced this case by translation and rescaling). 
Then, we consider a hypothetical process where we sample $Z'_i$ in an i.i.d.\ fashion according to the CDF $F$. We consider the following process: 
\[
\bG_n'(t) \defn \frac{1}{\sqrt{n}} \Big(\sum_{i=1}^n (\mathbf{1}\{Z'_i \leq t\} - F(t))\Big). 
\]
By Proposition~\ref{prop:converge} we have that $\bG_n'$ converges to a centered 
Gaussian process with covariance kernel $C'(t, s) = F(t \wedge s) - F(t) F(s)$. Interestingly, 
we also see that the limit processes $\bG, \bG'$ and their covariance kernels $C, C'$ are related via 
\[
C'(t, s) = C(t, s) + \Delta(t, s) , \quad 
\mbox{where}~\Delta(t, s) = \EE [f( t \wedge s, Z) - f_t(Z) f_s(Z)].
\]
It is straightforward to check that $\Delta$ is a positive semidefinite kernel on $[0, 1] \times [0,1]$, and we obtain
\[
\bG' = \bG + \mathbb{W}, 
\]
where $\mathbb{W}$ is a centered Gaussian process on $[0, 1]$
with the covariance $\Delta$.
Asymptotically, this relation shows that the empirical process associated with a monotone function is 
stochastically no larger than that of a corresponding CDF. 

\subsection{Proofs of main results}\label{sec:proof-main}

We now present the proofs of our main results, Theorems~\ref{thm:nonasymp_risk} and~\ref{thm:fclt}, 
and sketch the proof of Lemma \ref{lem:bootvalid} on validity of the bootstrap.  

\subsubsection{Proof of Theorem~\ref{thm:nonasymp_risk}}
\label{sec:proof-thm-nonasymp}

In Appendix~\ref{app:bdkwproof}, we prove the following result for function classes which are monotonically-indexed by a single parameter, which can be seen as a 
generalization of the DKW inequality.

\begin{theorem}\label{thm:dkw-os} Let $\cF = (f_t)_{t \in [0,1]}$ denote a monotonically-indexed function class, with $K = 1$.
Then, the rescaled and centered process
\[
\bG_n(t) \defn \frac{1}{\sqrt{n}} \Big(\sum_{i=1}^n (f_t(Z_i) - \EE f_t(Z_i)) \Big)
\]
satisfies 
\begin{equation}\label{eq:dkw-os}
\twomax{\PP\Big\{ \sup_t \bG_n (t) > x\Big\}}{\PP\Big\{ \sup_t -\bG_n(t) > x\Big\}}
\leq \e \, \exp(-2x^2),
\end{equation}
for all $x > 0$. 
\end{theorem}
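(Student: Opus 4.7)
The plan is to reduce the problem for monotonically-indexed function classes to the classical one-sided Dvoretzky--Kiefer--Wolfowitz (DKW) inequality for empirical CDFs, via a coupling argument. Assume without loss of generality that $f_t(z) \in [0, 1]$ (the theorem's conclusion only makes dimensional sense under such a normalization).

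First, using monotonicity, I would define the generalized inverse $\tau(z, u) = \inf\{t \in [0, 1] : f_t(z) \geq u\}$, which (together with right-continuity of $f_t$ in $t$) yields the layer-cake identity
\[
f_t(z) = \int_0^1 \mathbf{1}\{\tau(z, u) \leq t\}\, du.
\]
Next, I would introduce auxiliary uniforms $U_1, \ldots, U_n \sim \Unif[0, 1]$ independent of $Z_1, \ldots, Z_n$, and set $T_i = \tau(Z_i, U_i)$. A short computation shows that these $T_i$ are i.i.d.\ with common CDF $F(t) := \EE f_t(Z)$, and that
\[
\hat F_n(t) \defn \frac{1}{n}\sum_{i=1}^n f_t(Z_i) = \EE\big[\hat G_n(t) \,\big|\, Z_1, \ldots, Z_n\big], \quad \text{where} \quad \hat G_n(t) \defn \frac{1}{n}\sum_{i=1}^n \mathbf{1}\{T_i \leq t\}.
\]
Equivalently, $\bG_n(t) = \EE[\tilde \bG_n(t) \mid Z_{1:n}]$, where $\tilde \bG_n(t) = \sqrt{n}(\hat G_n(t) - F(t))$ is a \emph{genuine} empirical CDF process indexed by $t$.

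Since $\sup_t$ is a convex (pointwise) operation, Jensen's inequality then yields $\sup_t \bG_n(t) \leq \EE[\sup_t \tilde \bG_n(t) \mid Z_{1:n}]$. Massart's sharp one-sided DKW inequality gives $\PP(\sup_t \tilde \bG_n > s) \leq \exp(-2 s^2)$ for the empirical CDF process $\tilde \bG_n$, so all that remains is to transfer this tail bound through the conditional expectation, picking up the factor of $\e$.

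The main obstacle is precisely this final transfer step. A naive exponential-Markov argument—using the conditional Jensen inequality to get $\EE e^{\lambda \sup_t \bG_n} \leq \EE e^{\lambda \sup_t \tilde\bG_n}$ by tower, then optimizing $\lambda$—only produces a prefactor that grows linearly in $x$, not the desired constant $\e$. To recover exactly $\e$, one must invoke a sharper transfer in the spirit of the scalar inequality (1.1) of \cite{bentkusHoeffdingInequalities2004}, which compares the tail of a bounded sum directly to a sub-Bernoulli reference with prefactor $\e$. In the functional setting this takes the form of a convex-order comparison between the distributions of $\sup_t \bG_n$ and $\sup_t \tilde\bG_n$, combined with a careful integration-by-parts / tail-sharpening argument that extracts the constant $\e$ and the exponent $-2x^2$ simultaneously. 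The lower bound $\PP(\sup_t (-\bG_n) > x) \leq \e \exp(-2x^2)$ follows by the same reduction applied to the complementary process $F - \hat F_n$, which inherits the same monotone structure.
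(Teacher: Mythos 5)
Your reduction is exactly the paper's: you couple each $Z_i$ with a $T_i$ drawn from the conditional CDF $t \mapsto f_t(Z_i)$ (the paper does this directly rather than via the inverse $\tau$, but it is the same coupling), observe that $\bG_n(t) = \EE[\tilde\bG_n(t)\mid Z_{1:n}]$ for a genuine empirical CDF process $\tilde\bG_n$, and pass the supremum through the conditional expectation. Note that your Jensen step already gives you slightly more than you claim: from $V := \sup_t \bG_n(t) \le \EE[V^*\mid Z_{1:n}]$ with $V^* := \sup_t \tilde\bG_n(t)$, monotonicity plus conditional Jensen yields $\EE[\varphi(V)] \le \EE[\varphi(V^*)]$ for every non-decreasing convex $\varphi$, i.e.\ $V \le_{\textup{icx}} V^*$. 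That is the paper's Lemma on dispersion, and it is all the comparison you need.

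The one step you leave open --- transferring the DKW tail of $V^*$ to $V$ with prefactor $\e$ --- is where your sketch stops short, and it is worth being precise about what fills it, since no ``integration-by-parts / tail-sharpening'' beyond the following is required. Apply the icx ordering only to the functions $\varphi_r(x) = (x-r)_+$ and take an infimum over $r < x$ after dividing by $(x-r)$: this says the Bentkus transforms of the two survival functions satisfy $\mathcal{B}[S](x) \le \mathcal{B}[S^*](x)$, where $\mathcal{B}[S](x) = \inf_{r<x} \EE[(X-r)_+]/(x-r)$. Markov's inequality applied to $(X-r)_+$ gives $S(x) \le \mathcal{B}[S](x)$, and the nontrivial ingredient --- Lemma 4.2 of Bentkus (2004), due to Pinelis/Kemperman --- gives $\mathcal{B}[S^*](x) \le \e\, [S^*]^\circ(x)$, where $[S^*]^\circ$ is the log-concave hull of $S^*$. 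Since Massart's one-sided DKW bound $S^*(x) \le \exp(-2x^2)$ has a log-concave right-hand side, the hull satisfies $[S^*]^\circ(x) \le \exp(-2x^2)$ as well, and the chain $S(x) \le \mathcal{B}[S](x) \le \mathcal{B}[S^*](x) \le \e\,[S^*]^\circ(x) \le \e\exp(-2x^2)$ closes the argument. So your proposal is the right proof with the correct reference identified, but the decisive lemma ($\mathcal{B}[S] \le \e S^\circ$ for any survival function) is asserted to exist rather than invoked; as written, the factor $\e$ is not actually derived. The $-\bG_n$ case is handled, as you say, symmetrically.
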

Theorem~\ref{thm:nonasymp_risk} now follows by taking $\lambda = x$, and identifying
\[
Z_i = (X_i, Y_i) \quad  \mbox{and} \quad 
f_t(Z_i) = \ell(\cC_t(X_i), Y_i), \quad \mbox{for}~i=1,\ldots,n. 
\]
Note that the monotonicity assumption in Section \ref{sec:setting-technical} (decreasing)
is the reverse of the one for monotonically-indexed classes (increasing).

\subsubsection{Proof of Theorem~\ref{thm:fclt}}
We first identify $Z_i = (X_i, Y_i)$ and $f_t(Z_i) = \ell(\cC_t(X_i), Y_i)$, as in the proof of Theorem \ref{thm:nonasymp_risk}. Then Theorem \ref{thm:fclt} follows from 
Proposition \ref{prop:converge}, because for any $z$, right continuity in $t$
can be assumed without loss of generality by 
redefining $f_t(z)$ at the discontinuity points, which are countably many 
due to monotonicity, and uniform boundedness holds because $0 \leq f_t(z) \leq 1$. 

\subsubsection{Proof sketch of Lemma~\ref{lem:bootvalid}}

let $\cD_n = Z_1, \dots, Z_n$ be i.i.d, and let the centered, rescaled bootstrap distribution be
\[\bG_n^\star(t) = \frac{1}{\sqrt{n}}\left(\sum_{i = 1}^n (M_{n, i} - 1) f_t(Z_i) \right) \]
where $M_{n} \sim \textup{Multinomial}(n, 1/n, \dots, 1/n)$. 

The functional CLT in Theorem \ref{thm:fclt} holds if and only if the 
bootstrap distribution is accurate, in the sense that the conditional law of $\bG_n^\star \mid \cD_n$ converges to the distribution of $\bG$, in probability. The conclusion follows by a version of the continuous mapping theorem that holds for bootstrap distributions. See Appendix \ref{app:lempropproof} for details.

\section{Discussion}\label{sec:discussion}

We have presented an approach to distribution-free predictive inference that allows post hoc optimization of bounded, monotone risk functions.
Unlike most existing work, our bounds are \textit{uniform}, enabling exploratory revision of levels after seeing the data. 

An alternative would be to set $\alpha$ ahead of time via data splitting. Neither approach is better than the other in general~\cite{kuchibhotlaPostSelectionInference2022c}. 
But in the machine-learning problems that motivate our work, where uncertainty quantification may be one component of an overall complex engineering system, the property of uniform bounds seems particularly useful. It allows the choice of $t$ to be a complex function of other components of the system.

Our best performing methods, based on the bootstrap, have asymptotic validity but do not have provable finite-sample validity. This is a familiar issue---the inequality of Theorem \ref{thm:nonasymp_risk} is akin to a Hoeffding inequality, whereas the bootstrap is akin to a central limit theorem. This begs the question: it possible to derive an analog of a Bernstein inequality---a tight, variance-aware, finite-sample bound? Such a result was recently demonstrated for binary losses~\cite{bartl2023variance}, and we defer further investigation to figure work.

Another natural question concerns the extension of the present tools to confidence bounds for truly non-monotone risks, rather than near monotone risks or combinations of them. The method of Learn Then Test \cite{angelopoulosLearnThenTest2022b} achieves finite-sample risk control in the binary setting by gridding the space of parameters into $p$ points and performing tests at each point to assess whether the risk is below a level $\alpha$, with multiplicity correction. That work, however, does not estimate the underlying dependencies between the tests, a task at which bootstrap methods succeed asymptotically.

\printbibliography

\appendix

\section{Experimental details}\label{app:mscocoinfo}

\subsection{A concentration inequality}\label{app:wsr_bound}

The following bound is drawn from the work of \cite{waudby-smithEstimatingMeansBounded2023}:

\begin{equation}\label{eq:wsr-def}
    \Lhatp(t) = \inf\left\{p \ge 0: \max_{i = 1,\ldots, n}\mathcal{K}_i(p; t) > \frac{1}{\delta}\right\}.
\end{equation}
where $\mathcal{K}$ is referred to as a capital process in $i$, defined in terms of further quantities:
\begin{gather}
    \mathcal{K}_i(p; t) = \prod_{j=1}^{i}\left\{1 - \lambda_j(t) (\ell_j(t) - p)\right\}, \text{ where} \\
    \hat{\mu}_i(t) = \frac{1/2 + \sum_{j=1}^{i} \ell_{j}(t)}{1 + i}, \,\, \hat{\sigma}_i^2(t) = \frac{1/4 + \sum_{j=1}^{i}(\ell_{j}(t) - \hat{\mu}_{j}(t))^2}{1 + i}, \,\, \lambda_{i}(t) = \min\left\{1, \sqrt{\frac{2\log(1 / \delta)}{n\hat{\sigma}_{i-1}^2(t)}}\right\}.
\end{gather}

\begin{theorem}[Based on Theorem 3 of \cite{waudby-smithEstimatingMeansBounded2023}]\label{thm:wsr}
For any parameter $t$ and any finite sample size $n$, 
\begin{equation}
    \PP(L(t) \leq \Lhatp(t)) \geq 1 - \delta.
\end{equation}
\end{theorem}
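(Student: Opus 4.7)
The plan is to establish the bound by recognizing that at $p = L(t)$, the capital process $(\mathcal{K}_i(p; t))_{i \ge 0}$ is a nonnegative martingale, applying Ville's maximal inequality to it, and then converting that tail bound into a one-sided confidence statement via monotonicity of $\mathcal{K}_i(p;t)$ in $p$. Throughout I fix $t$ and write $\ell_j$ for $\ell(\cC_t(X_j), Y_j)$ and $\mu = L(t)$.

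First I would set up the filtration $\mathcal{F}_i = \sigma(\ell_1, \ldots, \ell_i)$ and verify that the tuning sequence $\lambda_i$ is $\mathcal{F}_{i-1}$-measurable, which is immediate from its definition as a function of $\hat\sigma_{i-1}^2(t)$. Using $\mathbb{E}[\ell_i - \mu \mid \mathcal{F}_{i-1}] = 0$ (i.i.d.\ data) together with predictability of $\lambda_i$, one obtains
\[
\mathbb{E}\bigl[\mathcal{K}_i(\mu; t) \mid \mathcal{F}_{i-1}\bigr] = \mathcal{K}_{i-1}(\mu; t) \cdot \bigl(1 - \lambda_i\, \mathbb{E}[\ell_i - \mu \mid \mathcal{F}_{i-1}]\bigr) = \mathcal{K}_{i-1}(\mu; t),
\]
so $\mathcal{K}_i(\mu; t)$ is a martingale with $\mathcal{K}_0 = 1$. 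Nonnegativity comes from the cap $\lambda_i \le 1$: the minimum of $1 - \lambda_i(\ell_i - p)$ over $(\ell_i, p) \in [0,1]^2$ is $1 - \lambda_i \ge 0$, which works for every $p \in [0,1]$, not just $p = \mu$; this uniform nonnegativity will be needed below. Ville's inequality then yields
\[
\PP\Bigl(\max_{i=1,\ldots,n} \mathcal{K}_i(\mu; t) > 1/\delta\Bigr) \le \delta.
\]

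Next I would observe that for each fixed sample path, $p \mapsto \mathcal{K}_i(p; t)$ is nondecreasing: each factor $1 - \lambda_j(\ell_j - p) = 1 + \lambda_j(p - \ell_j)$ is linear and nondecreasing in $p$, and all factors remain nonnegative for $p \in [0,1]$, so the product inherits monotonicity. Consequently, on the high-probability event from the previous display, for any $p \le \mu$ one has $\max_i \mathcal{K}_i(p; t) \le \max_i \mathcal{K}_i(\mu; t) \le 1/\delta$, so $p$ does not belong to the set $\{p \ge 0 : \max_i \mathcal{K}_i(p; t) > 1/\delta\}$ whose infimum defines $\Lhatp(t)$. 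Therefore $\Lhatp(t) \ge \mu = L(t)$ with probability at least $1 - \delta$, which is the claim (the edge case of an empty infimum set is trivial by the convention $\inf \emptyset = +\infty$).

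The only delicate point is the uniform nonnegativity of the factors $1 - \lambda_j(\ell_j - p)$ over $(\ell_j, p) \in [0,1]^2$: without the truncation $\lambda_i \le 1$ built into the definition, the product could change sign as $p$ varies, and the monotonicity step above would collapse. Beyond that, the argument is essentially a packaging of Ville's inequality plus an elementary monotonicity observation, with no deeper empirical process machinery required; this is why the conclusion holds with \emph{finite-sample} rather than asymptotic validity for every $t$ and every $n$.
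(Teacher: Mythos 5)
Your proof is correct. The paper itself does not prove Theorem \ref{thm:wsr} --- it imports the result directly from Theorem 3 of \cite{waudby-smithEstimatingMeansBounded2023} --- and your argument is a faithful, self-contained reconstruction of the standard proof from that source: predictability of $\lambda_i$ makes $\mathcal{K}_i(L(t); t)$ a nonnegative martingale started at $1$, Ville's inequality controls its running maximum, and monotonicity of $p \mapsto \mathcal{K}_i(p; t)$ (which, as you rightly emphasize, hinges on the truncation $\lambda_i \le 1$ keeping every factor nonnegative for $(\ell_j, p) \in [0,1]^2$) converts the tail bound into $\Lhatp(t) \ge L(t)$ with probability at least $1-\delta$.
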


As $L(t)$ is simply the mean of a bounded random variable, the same conclusion 
can hold for simpler upper bounds $\Lhatp(t)$, such as the Hoeffding bound, but
typically Waudby-Smith and Ramdas' bound seems tighter 
than any other bound with proven finite-sample validity. 

In the present setting, however, this bound is only pointwise valid, not uniform; that is,
it is not simultaneously valid for all $t \in [0,1]$, 
or any other subset $\cT \subset [0,1]$ which is not a singleton. 

\subsection{More details for MS COCO}\label{app:details_violation}

In this section, we provide additional detail regarding how we computed the quantities discussed in Section \ref{sec:violation} and Section \ref{sec:experiments} for the MS COCO dataset.

\textbf{Splitting the MS-COCO dataset.} The 2014 MS COCO dataset \cite{linMicrosoftCOCOCommon2015} consists of about 200K labeled images, of which ${\sim}120$K are designated as either training or validation images. The label of each image $X \in \cX$ is a vector $Y \in \{0,1\}^{80}$, corresponding to which of 80 classes are present in the image. We separated the 120K train/val images into three splits.

\textbf{Split 1: Training a classifer.}  Half of the images went to a split for training a TResnet model \cite{ridnikTResNetHighPerformance2020} for 29 epochs, which computes logits given $X$; we obtained a model of scores $f: \cX \to [0,1]^K$ by converting the raw logits using a sigmoid activation, and a classifier $\cC_t$ via equation \eqref{eq:threshold-classifier}. The model was cached for every epoch. 

\textbf{Split 2: Choosing the best epoch.} A second split of 1K images was used to choose from the epochs the best performing classifier in terms of mean average precision (mAP), namely epoch 5. We obtain a final classifier $\cC_t$.

\textbf{Split 3: Calculating performance metrics.} Denote the remaining third split of ${\sim}60K$ images from the train/val set as $\Dcoco$. The MS COCO image/label pairs $(X,Y)$ can be thought of as samples from some joint distribution of MS COCO-type images. But we do not know this distribution and cannot sample from it; in particular, we do not know ground truth risks, such as the FNR, from which to exactly calculate performance metrics such as miscoverage. 

(Some notation: Let $\hat L(t; \cD)$ denote an empirical risk calculated using data $\cD$, and similarly let $\Lhatp(t; \cD)$ denote some upper bound. Let $\cD^*_n$ denote a sample of size $n$ with replacement from $\cD$, and let $\texttt{true}_n$ be an iid sample of size $n$ from the true distribution of MS COCO images.)

To compute a surrogate quantity, on each simulation we randomly split $\Dcoco$ into two halves, a holdout set $\cH$ and a sampling set $\cS$; we picked $n$ points with replacement $\cS^*_n$; using these $n$ points, we computed $\hat t$ from optimizing a trade-off (see Section \ref{app:lossrisk}), where $t$ takes values on an even grid of 500 points in $[0,1]$; and then we evaluate performance metrics by treating $\hat L(t; \cH)$ as ground truth. 

For example, we approximate anywhere miscoverage, which is
\[
    \PP\left( L(t) > \Lhatp(t; \texttt{true}_n) \text{ for all } t \in [0,1] \right),
\]
by the surrogate
\[
    \PP\left( \hat L_n(t, \cH) > \Lhatp(t; \cS^*_n) \text{ for all } t \in [0,1] \mid \Dcoco  \right),
\]
so the probability is taken over the split of $\Dcoco$ as well as the sample of size $n$ from $\cS$. The function $\hat L_n (\cdot \,;\cH)$ serves as a surrogate for the true mean $L( \cdot)$, computed using a holdout set, and the sample of $n$ points $\cS^*_n$ serves as a surrogate for sampling $n$ points iid from the true distribution of MS COCO-type images.

\subsection{MS COCO losses and risks} \label{app:lossrisk}

Consider the classifier from Split 2 of Appendix \ref{app:details_violation}. The following are multi-label classification losses used for the MS COCO dataset:

\begin{align}
    \ell_\textup{FNP}(\cC_t(X), Y) &= \frac{\sum_{k = 1}^K 1\{[\cC_t(X)]_k = 0, [Y]_k = 1\}  }{ 1 \vee \sum_{k = 1}^K 1\{[Y]_k = 1\} } \\
    \ell_\textup{FPP}(\cC_t(X), Y) &= \frac{\sum_{k = 1}^K 1\{[\cC_t(X)]_k = 1, [Y]_k = 0\}  }{ 1 \vee \sum_{k = 1}^K 1\{[Y]_k = 0\} } \\
    \ell_\textup{FDP}(\cC_t(X), Y) &= \frac{\sum_{k = 1}^K 1\{[\cC_t(X)]_k = 1, [Y]_k = 0\}  }{ 1 \vee \sum_{k = 1}^K 1\{[\cC_t(X)]_k = 1\} } \\
    \ell_\textup{SetSize}(\cC_t(X), Y) &= \frac{\sum_{k = 1}^K 1\{[\cC_t(X)]_k = 1\}  }{ K },
\end{align}
and the resulting risks are as follows:
\begin{align}
\textup{FNR}(t) &= \EE[\ell_\textup{FNP}(\cC_t(X), Y)] \\
\textup{FPR}(t) &= \EE[\ell_\textup{FPP}(\cC_t(X), Y)] \\
\textup{FDR}(t) &= \EE[\ell_\textup{FDP}(\cC_t(X), Y)] \\
\textup{SetSize}(t) &= \EE[\ell_\textup{SetSize}(\cC_t(X), Y)].
\end{align}

\begin{figure}
    \centering
    \includegraphics[width = \textwidth]{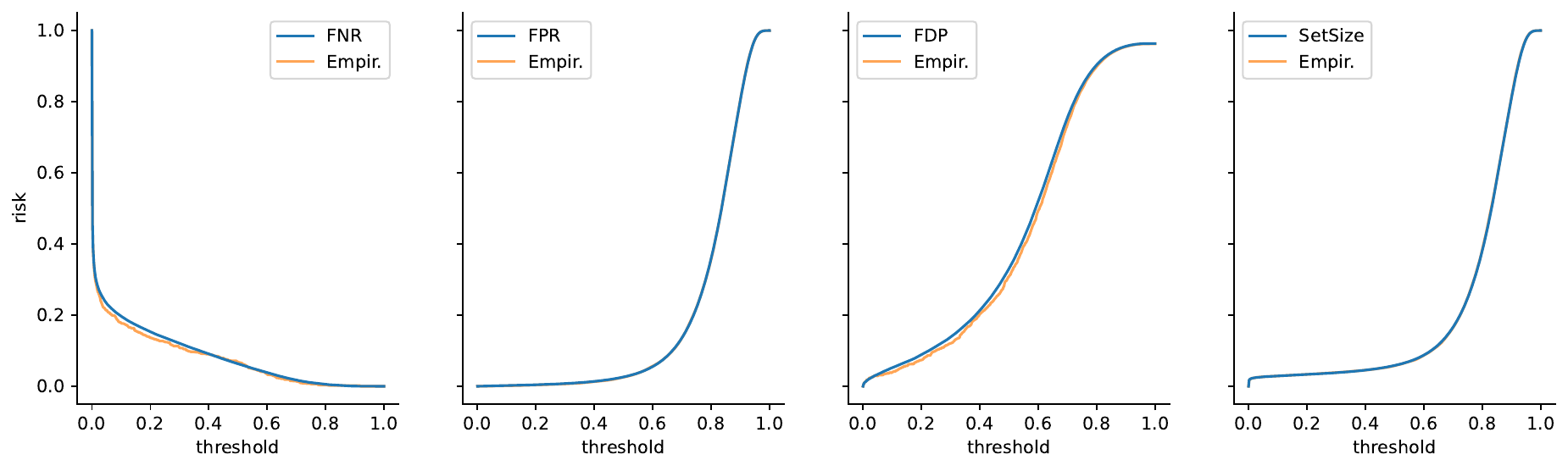}
    \caption{Four different risks on the MS COCO distribution, and an empirical estimate of them based on $n = 300$ data points.} \label{fig:mscoco_risks_appendix}
\end{figure}

An illustration of these risks is displayed in Figure \ref{fig:mscoco_risks_appendix}. Technically, the ground truth plotted in blue is actually computed based on a holdout dataset $\cH$, and the empirical estimate from sampling $n = 300$ points from a disjoint dataset $\cS$; see Appendix \ref{app:details_violation} for details. 

\paragraph{Risk tradeoffs.} All the risks are provably monotone, except for FDR, which appears to be nearly monotone. 
Since FNR is the only one which is decreasing, it makes sense
to trade all the others off of FNR. 

In particular, let $\hat t = \argmin_{t \in \That} \varphi(\hat L_n(t), \hat Q_n(t))$, where $\hat L_n$, $\hat Q_n$ represent given empirical risks, and $\varphi$ aggregates them in some way. The constraint $\That = \{t: \hat L_n(t) \leq r \}$ models the idea that $L_n$
too large would be intolerable; we set $r = 0.1$
in the experiments. Usually, this constraint was non-binding. 

We changed $\varphi, \hat L,$ and $\hat Q$ depending on the experiment.  Specifically, Figure \ref{fig:conservatism_mscoco} computes the tradeoff conservatism $\EE[\Lhatp(\hat t) - L( \hat t)]$ 
for different choices of $\varphi, \hat L,$ and $\hat Q$, 
depending on the risk being controlled:

\begin{itemize}
    \item FNR control: $\hat L$ is empirical FNR, $\hat Q$ is empirical FPR, and $\varphi(\ell, q) = \ell + q$. 
    \item FPR control: $\hat L$ is empirical FPR, $\hat Q$ is empirical FNR, and $\varphi(\ell, q) = \ell + q$. 
    \item FDR control: $\hat L$ is empirical FDR, $\hat Q$ is empirical FNR, and $\varphi(\ell, q) = - \textup{dist}\Big((\ell, q), \texttt{line}((1,0), (0,1))\Big)$. 
    \item SetSize control: $\hat L$ is empirical SetSize, $\hat Q$ is empirical FNR, and $\varphi(\ell, q) = - \textup{dist}\Big((\ell, q), \texttt{line}((1,0), (0,1))\Big)$. 
\end{itemize}

Here, $\textup{dist}((\ell,q), \texttt{line}(A,B))$ refers to the distance between the point $(\ell, q) \in \mathbb{R}^2$ and 
the closest point that lies on the line between $A, B$. 
Maximizing this distance 
finds the ``elbow'' on an ROC type curve
$(L(t), Q(t))$.

\section{Proofs}\label{app:proofs}

\subsection{Proofs of Lemmas and Propositions}\label{app:lempropproof}

To prove Lemma \ref{lem:bootvalid} we will draw upon two textbook theorems, 
written in the notation of \Cref{sec:bounds}. In particular, 
let $\cD_n = Z_1, \dots, Z_n$ be i.i.d. with common distribution $P$, and define
\[\bG_n^\star(t) = \frac{1}{\sqrt{n}}\left(\sum_{i = 1}^n (M_{n, i} - 1) f_t(Z_i) \right), \]
where $M_{n} \sim \textup{Multinomial}(n, 1/n, \dots, 1/n)$. 

The first is a central limit theorem for bootstrap processes. 

\begin{theorem}[Theorem 2.6 (i, ii) from \cite{kosorokIntroductionEmpiricalProcesses2008a}]\label{thm:kosorok_CLT}
A function class $\cF$ is $P$-Donsker if and only if
\[
    \bG_n^\star \to \bG
\]
in distribution (conditionally on $\cD_n$, in probability), and also $\bG_n^\star$ is asymptotically measurable.
\end{theorem}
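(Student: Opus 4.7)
The plan is to prove the equivalence via the standard Gin\'e--Zinn bootstrap central limit theorem, which is the fundamental result connecting the Donsker property to bootstrap consistency. The forward direction---that $P$-Donsker implies the conditional bootstrap CLT plus asymptotic measurability---is the substantive half, and I would verify it by establishing the two usual ingredients for weak convergence in $\ell^\infty(\cF)$: (i) convergence of finite-dimensional distributions of the conditional law of $\bG_n^\star$ given $\cD_n$ to those of $\bG$, and (ii) asymptotic equicontinuity of $\bG_n^\star$ in probability with respect to the $L^2(P)$ pseudometric.

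For step (i), I would exploit the fact that the multinomial weights $M_{n,i} - 1$ are centered and exchangeable, with the conditional Gram matrix $n^{-1} \sum_i f_{t_j}(Z_i) f_{t_\ell}(Z_i) - \bar f_{t_j} \bar f_{t_\ell}$ where $\bar f_t = n^{-1}\sum_i f_t(Z_i)$. The law of large numbers makes this converge in probability to the covariance kernel of $\bG$, and a conditional Lindeberg CLT, applicable because a $P$-Donsker class has a square-integrable envelope, yields the right Gaussian fidis. For step (ii), I would couple the multinomial multipliers with i.i.d.\ mean-zero multipliers (centered Poissons via a Poissonization step) and then use Rademacher symmetrization, transferring the asymptotic equicontinuity of the empirical process (which holds by the Donsker hypothesis) to the bootstrap process through the standard multiplier inequality of van der Vaart and Wellner. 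The asymptotic measurability of $\bG_n^\star$ then follows from joint measurability of $(M_n, \cD_n)$ together with the same envelope and entropy conditions that gave measurability of $\bG_n$.

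The reverse direction is shorter: by Fubini applied in the outer-integral sense, conditional convergence in probability upgrades to unconditional convergence of $\bG_n^\star$ in distribution; then a de-symmetrization argument, using exchangeability of the multinomial weights to compare $\bG_n$ and $\bG_n^\star$ on the same underlying randomness, transfers the Gaussian limit back to $\bG_n$, so $\cF$ is $P$-Donsker. The main obstacle is the multiplier inequality underlying step (ii): because the multinomial weights sum to zero and are not independent, a nontrivial Poissonization comparison is needed to replace the bootstrap process by a cleaner i.i.d.\ multiplier process before the Donsker hypothesis can be invoked. Controlling this discrepancy uniformly over $\cF$, while consistently working with outer probability to accommodate nonseparability of $\ell^\infty(\cF)$, is the most technical part of the argument.
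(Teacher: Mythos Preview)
The paper does not prove this statement; it is quoted verbatim as a textbook result (Theorem 2.6 of Kosorok) and used as a black box in the proof of Lemma~\ref{lem:bootvalid}. So there is no ``paper's own proof'' to compare your proposal against.

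That said, your sketch is a faithful outline of the classical Gin\'e--Zinn argument that underlies the cited textbook theorem: conditional fidi convergence via a Lindeberg CLT with the sample covariance converging to the population kernel, asymptotic equicontinuity transferred from the empirical process via Poissonization and the multiplier inequality, and the converse via unconditional tightness plus a desymmetrization. One caution: the Donsker property alone does not imply the envelope is square-integrable (only weak second moments are guaranteed in general), so your step (i) justification needs the more delicate truncation/Poissonization that Gin\'e--Zinn and Kosorok actually use rather than a direct appeal to a square-integrable envelope. With that adjustment, your plan matches the standard proof in the reference.
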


Next, following the notation of the textbook \cite{kosorokIntroductionEmpiricalProcesses2008a}, 
let the Banach spaces $\mathbb{D} = \ell^\infty(\cF)$ and 
$\mathbb{E} = \mathbb{R}$ have 
the uniform norm. We have the following continuous mapping theorem 
for bootstrap processes. 

\begin{theorem}[Theorem 10.8 from \cite{kosorokIntroductionEmpiricalProcesses2008a}]\label{thm:kosorok_ctsmap}
Let $g: \mathbb{D} \to \mathbb{E}$ be continuous, and assume that the map taking
$M_n \mapsto h(\bG_n^\star)$ is measurable for every bounded, continuous $h: \mathbb{D} \to \mathbb{R}$. Then if $\bG_n^\star$ converges in distribution 
to a tight process $\bG$ (conditionally on $\cD_n$, in probability), then $g(\bG_n^\star)$ converges in distribution to
$g(\bG)$ (conditionally on $\cD_n$, in probability). 
\end{theorem}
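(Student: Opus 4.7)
The plan is to derive this conditional continuous mapping theorem from the unconditional continuous mapping theorem for weak convergence of probability measures on Polish spaces, via a subsequence argument. The key observation is that the phrase ``converges in distribution conditionally on $\cD_n$, in probability'' is metrizable: it is equivalent to saying that $d_{BL}\bigl(\mathcal{L}(\bG_n^\star \mid \cD_n),\, \mathcal{L}(\bG)\bigr) \to 0$ in probability, where $d_{BL}$ is the bounded Lipschitz metric on probability measures on $\mathbb{D}$. The measurability hypothesis on $M_n \mapsto h(\bG_n^\star)$ is what makes this distance itself a measurable function of $\cD_n$, so that ``convergence in probability'' even parses.

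First I would reduce the claim to showing that for every bounded Lipschitz $\tilde h \colon \mathbb{E} \to \mathbb{R}$, the conditional expectation $\EE[\tilde h(g(\bG_n^\star)) \mid \cD_n]$ converges in probability to $\EE[\tilde h(g(\bG))]$, together with the corresponding measurability. The composite $h \defn \tilde h \circ g$ is bounded and continuous on $\mathbb{D}$, so the measurability of $M_n \mapsto h(\bG_n^\star)$ is inherited directly from the hypothesis, and conditional expectations are well defined. Next, I would invoke the classical subsequence principle: a sequence of real random variables tends to $0$ in probability iff every subsequence admits a further subsequence along which convergence is almost sure. Apply this to $d_{BL}(\mathcal{L}(\bG_n^\star \mid \cD_n), \mathcal{L}(\bG))$: along any a.s.-convergent subsequence $n_k$, for $\omega$ in a set of full measure the (random) measures $\mu_{n_k}(\omega) \defn \mathcal{L}(\bG_{n_k}^\star \mid \cD_{n_k})(\omega)$ converge weakly to the tight measure $\mathcal{L}(\bG)$ on $\mathbb{D}$. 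The standard continuous mapping theorem for pushforwards then yields $g_* \mu_{n_k}(\omega) \rightsquigarrow g_*\mathcal{L}(\bG) = \mathcal{L}(g(\bG))$ on $\mathbb{E}$. Applying the subsequence principle in reverse promotes this pointwise-in-$\omega$ weak convergence back to convergence in probability of $d_{BL}(\mathcal{L}(g(\bG_n^\star) \mid \cD_n), \mathcal{L}(g(\bG)))$.

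The main obstacle, and the reason the measurability hypothesis appears explicitly in the statement rather than being suppressed, is the non-separability of $\mathbb{D} = \ell^\infty(\cF)$ in the empirical-process framework: the bootstrap process $\bG_n^\star$ may not be Borel measurable and one must work with outer expectations and measurable majorants throughout. Care is needed to ensure that $\tilde h \circ g \circ \bG_n^\star$ retains sufficient measurability to interpret the conditional expectations, and that the subsequence argument goes through in the outer/inner probability framework of Chapter 7 of \cite{kosorokIntroductionEmpiricalProcesses2008a}. Tightness of the limit $\bG$ mitigates this: $\mathcal{L}(\bG)$ is concentrated on a separable subset of $\mathbb{D}$, so continuity of $g$ on that separable support is all that is actually needed for the pushforward step, and one can localize all arguments there.
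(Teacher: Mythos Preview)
The paper does not prove this statement: it is quoted verbatim as Theorem 10.8 from Kosorok's textbook and invoked as a black box in the proof of Lemma~\ref{lem:bootvalid}. There is therefore no ``paper's own proof'' to compare your attempt against.

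That said, your subsequence argument is the standard route to results of this type and is essentially how Kosorok proves it. The reduction to bounded Lipschitz test functions, the metrization of conditional weak convergence via $d_{BL}$, and the use of the subsequence principle to pass from ``in probability'' to ``almost surely along a further subsequence'' (where the ordinary continuous mapping theorem applies pointwise in $\omega$) are all correct. Your caveat about outer integrals and non-separability of $\ell^\infty(\cF)$ is also well placed: in Kosorok's framework one must verify that the relevant maps are appropriately measurable so that the bounded Lipschitz distance is itself a genuine random variable, which is exactly why the measurability hypothesis on $M_n \mapsto h(\bG_n^\star)$ is stated explicitly. If you wanted to turn this into a complete proof rather than a sketch, the remaining work would be to spell out precisely which version of the continuous mapping theorem (for possibly non-measurable maps into a non-separable space, with a tight limit) you are invoking at the pointwise-in-$\omega$ step.
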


\begin{proof}[Proof of Lemma \ref{lem:bootvalid}]

First, a minor note. This lemma was stated in the notation of Section
\ref{sec:method}; it remains true as stated using 
the notation of \Cref{sec:bounds}, 
and it is easy to translate between the two. 

Now the fact that $\bG_n$ converges in distribution to a Gaussian process $\bG$ is
the definition of $\cF$ being a $P$-Donsker class. Hence we may use Theorem \ref{thm:kosorok_CLT} to claim that $\bG_n^\star$ converges in distribution 
to the Gaussian process $\bG$ (conditionally on $\cD_n$, in probability). Now we may plug 
this into Theorem \ref{thm:kosorok_ctsmap} 
using $g(F) = \sup_{t \in [0,1]} \pm F(t)$, 
after the continuity and measurability conditions are checked. But the continuity 
follows from the continuity of the uniform norm, and the measurability 
certainly holds, so the conclusion of Theorem \ref{thm:kosorok_ctsmap} holds, which gives the result.
\end{proof}

\begin{proof}[Proof of Proposition \ref{prop:vc}]
Consider a set of distinct points $P = \{p_1, \dots, p_{K+1} \} \in \cZ \times \RR$, and let $S_t = \{(z, s) : f_t(z) < s \}$ denote a subgraph. Essentially, we perform
the classical 
calculation of the VC dimension of 
half-intervals in $\RR^K$. 

Suppose
w.l.o.g that for every $k$, $p_k \in S_t$ for some $t$, 
and define $t^*_{j, k} = \inf\{r : \exists t \text{ s.t. } t_j = r, p_k \in S_t\}$, the smallest $t_j$ such that $S_t$ picks out $p_k$.
Define $t^{**}_j = \max_{k \in [K]} t^*_{j, k}$.
We now consider two cases.

If there is $k'$ such that 
$t^*_{j, k'} < t^{**}_j$ for all $j$, 
then whenever $P \setminus \{p_{k'}\} \subset S_t$, 
then $t_j \geq t_j^{**}$ 
for each $j$. This
implies that $t_j > t^*_{j, k'}$ for all $j$, so by monotonicity, 
$P \subset S_t$, so $S_t$ 
cannot pick out the configuration $P \setminus \{p_{k'}\}$. 

If instead for 
every $k$, we have
$t^*_{j, k} = t^{**}_j$ for some $j$, then
since $j \in \{1, \dots, K \}$ and
$k \in \{1, \dots, K + 1 \}$,
there must exist $k', k''$ such that
$t^*_{j, k'} = t^{**}_{j, k''} = t_j^{**}$.
In this case, the two-point set
$\{p_{k'}, p_{k''}\}$ cannot be shattered, 
because if there is $t$ satisfying
$p_{k'} \in S_{t}$, $p_{k''} \notin S_{t}$, then
\[
    t^{**}_j = t^*_{j, k'} \leq  t_{j}
     < t^*_{j, k''} = t^{**}_j,
\]
which is a contradiction. 

\end{proof}

\begin{proof}[Proof of Proposition \ref{prop:converge}]\label{app:zrnicproof}
From Theorem 19.14 and Lemma 19.15 of \cite{vandervaartAsymptoticStatistics1998}, if a ``suitably measurable'' function class has finite VC dimension and uniformly bounded, then it is $P$-Donsker. The definition of $P$-Donsker is precisely the convergence given in the result. 

Suitable measurability is a technical condition that \cite{vandervaartAsymptoticStatistics1998} does not define, but he does note that it is sufficient that there is a countable collection of functions $\cG$ in which for each $f_t(\cdot)$ we can find a sequence $g^1, g^2, \ldots : \cZ \to \RR$ satisfying $\lim_{m \to \infty} g^m(z) \to f_t(z)$ at each $z$. By the right continuity of $f_t$, it suffices to take $\cG$ equal to the subset of $\cF$ where $t$ is rational.
\end{proof}

\subsection{Proof of Theorem \ref{thm:locsim_bootstrap_risk}}

The following proof uses the same core idea as that of Theorem 1 of 
\cite{zrnicLocallySimultaneousInference2023}, except that we extend their original,
nonasymptotic examples to the present asymptotic setting of the bootstrap. 
Also, we use the notation of Section \ref{sec:method} rather than their 
general notation, which enables a less abstract proof.

In addition to the notation of Section \ref{sec:method}, we set additional 
notation for the proofs and two lemmas. 
Let the realized set of selected $t$ be denoted as
\[
    \That = \{t : \hat L_n(t) \leq r \},
\]
let the population sublevel set be denoted as
\[
    \trueT = \{t : \hat L_n(t) \leq r \},
\]
let the predictive set of selected $t$, with known mean, be denoted as
\[
    \predT = \left\{t : L(t) \leq r + \frac{\hatqglob}{\sqrt{n}} \right\},
\]
and let the predictive set of selected $t$, with estimated mean, be denoted as
\[
    \predThat = \left\{t : \hat L_n(t) \leq r + 2\frac{\hatqglob}{\sqrt{n}} \right\}.
\]
Though we do not directly use this fact, the predictive sets have the 
property that when an independent copy of $\hat L_n$ is observed, and hence
an independent copy of the realized set $\That$, the predictive sets 
contain the new realized set with high probability.

For each $t$, let the one-sided and two-sided confidence sets for $L(t)$ be
\[
\Btloc{\predT} = \left\{y : y \leq \hat L_n(t) + \frac{\hatqloc{\predT}}{\sqrt{n}} 
\right\}
\]
and
\[
\Btglob = \left\{y : |y - \hat L_n(t)| \leq  \frac{\hatqglob}{\sqrt{n}} \right\}.
\]
They are confidence bands when interpreted as functions of $t$.

Finally, let $\bG$ refer to the Gaussian process of Theorem \ref{thm:fclt}; it is
the process to which the empirical process $\sqrt{n}(\hat L_n(t) - L(t))$ is convergent. 

Now we establish a lemma on the convergence of bootstrap quantiles. 
\begin{lemma}\label{lem:bootqtl}
    If $\hatqglob = \inf_q \{q : \PP(D_{n, \star} > q \mid \cD_n) 
    \leq \delta_{\mathrm{glob}}) \}$
    is the conditional $1 - \delta_{\mathrm{glob}}$ quantile of $D_{n, \star}$, 
    and for any $\cT$, $\hatqloc{\cT} = \inf_q \{q : \PP(D_{n, \star}^-(\cT) > q \mid \cD_n) 
    \leq \delta_{\mathrm{loc}}) \}$ is the conditional $1 - \delta_{\mathrm{loc}}$ 
    quantile of $D_{n, \star}^-(\cT)$, then
    \begin{equation}\label{eq:proof_cvg1}
        \hatqglob \overset{p}{\to} \qglob
    \end{equation}
    and 
    \begin{equation}\label{eq:proof_cvg2}
        \hatqloc{\predT} \overset{p}{\to} \qloc{\trueT},
    \end{equation}
    where $\qglob = \inf_q \{q : \PP(\sup_{t \in [0,1]} |\bG(t)| > q) 
    \leq \delta_{\mathrm{loc}}) \}$ is the limiting global quantile, and 
    $\qloc{\trueT} =  \inf_q \{q : \PP(\sup_{t \in 
    \trueT} \bG(t) > q) \leq \delta_{\mathrm{loc}})\}$ 
    is the limiting one-sided quantile.
\end{lemma}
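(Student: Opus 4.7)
The plan is to reduce both convergences in the lemma to a common template: starting from the bootstrap consistency of $\bG_n^\star$ in $\ell^\infty([0,1])$ (Lemma \ref{lem:bootvalid}), apply the bootstrap continuous mapping theorem (Theorem \ref{thm:kosorok_ctsmap}) to extract the relevant supremum, and then convert the resulting distributional convergence into quantile convergence by invoking continuity of the limit distribution function at the target quantile. Throughout, one must be careful that Lemma \ref{lem:bootvalid} gives in-probability, conditional-on-$\cD_n$ convergence of distributions, so the quantile-continuity step should be applied in the corresponding conditional sense.

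For \eqref{eq:proof_cvg1}, the functional $F \mapsto \|F\|_\infty = \sup_t|F(t)|$ is continuous on $\ell^\infty([0,1])$, so the conditional law of $D_{n,\star}$ given $\cD_n$ converges in distribution (in probability) to the law of $\sup_t|\bG(t)|$. Because $\bG$ is a nondegenerate centered Gaussian process, the CDF of $\sup_t|\bG(t)|$ is continuous (a consequence of standard Gaussian anti-concentration), so the continuity of the quantile map at level $1-\delta_{\rm glob}$ delivers $\hatqglob \overset{p}{\to} \qglob$. For \eqref{eq:proof_cvg2}, the genuine difficulty is that $\predT$ is a data-dependent set. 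Here I would use a sandwich. For any fixed $\eta > 0$, set $\trueT^\eta \defn \{t : L(t) \leq r + \eta\}$; since $\hatqglob = O_p(1)$ by step one, the event $E_{n,\eta} \defn \{\hatqglob/\sqrt n \leq \eta\}$ has probability tending to $1$, and on $E_{n,\eta}$ one has the deterministic set inclusions $\trueT \subseteq \predT \subseteq \trueT^\eta$ directly from the definition of $\predT$. Consequently
\[
\sup_{t \in \trueT} \bG_n^\star(t) \;\leq\; \sup_{t \in \predT} \bG_n^\star(t) \;\leq\; \sup_{t \in \trueT^\eta} \bG_n^\star(t).
\]
The outer suprema are taken over deterministic sets, so another application of Theorem \ref{thm:kosorok_ctsmap} with $F \mapsto \sup_{t \in \cT} F(t)$ gives that their conditional laws converge to those of $\sup_{t \in \trueT}\bG(t)$ and $\sup_{t \in \trueT^\eta}\bG(t)$, respectively. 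Sending $\eta \downarrow 0$ along a countable sequence, continuity of the sample paths of $\bG$ together with right-continuity of $L$ (so that $\trueT^\eta \downarrow \trueT$) yields $\sup_{t \in \trueT^\eta}\bG(t) \to \sup_{t \in \trueT}\bG(t)$ almost surely, which squeezes the middle and gives convergence in distribution of the conditional law of $D_{n,\star}^{-}(\predT)$ to the law of $\sup_{t \in \trueT}\bG(t)$. A final invocation of quantile continuity, now at $\qloc{\trueT}$, produces \eqref{eq:proof_cvg2}.

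The main obstacle is the sandwich step. Two pieces of regularity are needed for it to close: first, that $\trueT^\eta$ shrinks to $\trueT$ in a way compatible with the continuity of $\bG$, which one typically obtains by assuming continuity (or right-continuity) of $L$ and sample-path continuity of $\bG$; and second, that the CDFs of $\sup_t |\bG(t)|$ and $\sup_{t \in \trueT}\bG(t)$ are continuous at the target quantiles, so that weak convergence of the distributions implies convergence of the quantiles. Both conditions are standard for Gaussian suprema, but they must be verified explicitly in the paper's setting. Once they are in place, the argument is essentially a two-step application of continuous mapping and quantile inversion.
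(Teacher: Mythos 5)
Your proposal is correct in outline and reaches the same conclusions as the paper, but by a genuinely different route for \eqref{eq:proof_cvg2}. The paper writes $D_{n,\star}^-(\predT)$ as a supremum over all of $[0,1]$ of the product $\bG_n^\star(t)\cdot \ind{L(t)\le r+\hatqglob/\sqrt n}$, argues that the indicator process converges in probability to the fixed function $\ind{t\in\trueT}$, and then combines Slutsky's lemma with the bootstrap continuous mapping theorem (Theorem \ref{thm:kosorok_ctsmap}); you instead bracket the random index set between two deterministic sets, $\trueT\subseteq\predT\subseteq\trueT^\eta$, on an event of probability tending to one, apply continuous mapping to the two deterministic suprema, and close the sandwich by letting $\eta\downarrow 0$. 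The two arguments hinge on the same underlying regularity: for the paper, uniform-in-$t$ convergence of the indicator process fails precisely when $L$ takes values just above the level $r$ on a non-negligible set, and for you, $\sup_{t\in\trueT^\eta}\bG(t)\to\sup_{t\in\trueT}\bG(t)$ fails in the same circumstance. Your bracketing has the virtue of making this condition explicit, along with the quantile-continuity step needed to pass from weak convergence to convergence of quantiles, both of which the paper treats as immediate. Two small points to tidy: the relevant functional is $D_{n,\star}^-(\cT)=\sup_{t\in\cT}\bigl(-\bG_n^\star(t)\bigr)$, so your sandwich should be stated for $-\bG_n^\star$ (the same inclusions give the same inequalities), and identifying the resulting limit with $\sup_{t\in\trueT}\bG(t)$ as written in the lemma then uses that $-\bG$ and $\bG$ are equal in law because $\bG$ is centered Gaussian --- a step the paper also elides.
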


\begin{proof}
    First, equation \eqref{eq:proof_cvg1} holds as an immediate consequence of
    Lemma \ref{lem:bootvalid}. Next, note that $D_{n, \star}^-(\predT)$ can be 
    expressed as
    \[
        D_{n, \star}^-(\predT) = \sup_{t \in [0,1]} \bG_n^\star(t) \cdot 1\{L(t) \leq r + \frac{\hatqglob}{\sqrt{n}} \} 
    \]
    By Lemma \ref{lem:bootvalid}, the conditional law of $\bG_n^\star$ converges
    to that of $\bG$ in probability, and the process
    $1\{L(t) \leq r + \frac{\hatqglob}{\sqrt{n}}\}$ converges to the constant 
    function $1\{L(t) \leq r\} = 1\{t \in \trueT \} $ in probability, 
    so by Slutsky's lemma, the conditional law of 
    \[
        \bG_n^\star(t) \cdot 1\{L(t) \leq r + \frac{\hatqglob}{\sqrt{n}} \} 
    \]
    converges to the law of
    \[
        \bG(t) \cdot 1\{t \in \trueT \} 
    \]
    in probability. Finally, by a similar application of Theorem \ref{thm:kosorok_ctsmap} 
    as in the proof of 
    Lemma \ref{lem:bootvalid}, the conditional law of
    $D_{n, \star}^-(\predT)$ converges to that of $\sup_{t \in 
    \trueT} \bG(t)$ in probability. This directly implies
    equation \eqref{eq:proof_cvg2}.
\end{proof}

Second, we establish the key lemma of the proof, which mirrors Lemma 1 
of \cite{zrnicLocallySimultaneousInference2023}. 

\begin{lemma}\label{lem:keylem}
For any $n$, the inequality
\begin{equation}
\begin{split}
    \PP(L(t) \in \Btloc{\predThat}\, \textup{for all } t \in \That)  \geq 
    \PP\Big( \{L(t) &\in \Btloc{\predT}\, \textup{for all } t \in \predT\} 
    \\ &\textsc{ and }  \{L(t) \in \Btglob\, \textup{for all } t \in [0,1]\}\Big)
\end{split}
\end{equation}
holds.
\end{lemma}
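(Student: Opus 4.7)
The plan is to argue pointwise on an event: I will show that on the intersection event
\[
E \defn \big\{L(t) \in \Btloc{\predT}\ \forall t \in \predT\big\} \cap \big\{L(t) \in \Btglob\ \forall t \in [0,1]\big\},
\]
the conclusion $\{L(t) \in \Btloc{\predThat}\ \forall t \in \That\}$ holds deterministically. Monotonicity of probability then yields the stated inequality.

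The main work is a ``sandwich'' inclusion. Assume $E$ occurs. First, I would use the upper half of the two-sided global band to pass from $\That$ to $\predT$: if $t \in \That$ then $\hat L_n(t) \leq r$, and $L(t) \leq \hat L_n(t) + \hatqglob/\sqrt{n} \leq r + \hatqglob/\sqrt{n}$, so $t \in \predT$. Then I would use the lower half of the same band to pass from $\predT$ to $\predThat$: if $t \in \predT$ then $L(t) \leq r + \hatqglob/\sqrt{n}$, and $\hat L_n(t) \leq L(t) + \hatqglob/\sqrt{n} \leq r + 2\hatqglob/\sqrt{n}$, so $t \in \predThat$. This yields
\[
\That \subseteq \predT \subseteq \predThat \quad \text{on } E.
\]

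The second ingredient is monotonicity of the local quantile in its argument: if $\cT_1 \subseteq \cT_2$ then $D_{n,\star}^-(\cT_1) \leq D_{n,\star}^-(\cT_2)$ deterministically, so the conditional $(1-\delta_{\mathrm{loc}})$ quantile satisfies $\hatqloc{\cT_1} \leq \hatqloc{\cT_2}$. Applying this with $\cT_1 = \predT$ and $\cT_2 = \predThat$ (which are nested on $E$) gives $\hatqloc{\predT} \leq \hatqloc{\predThat}$.

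Combining the two ingredients finishes the argument: fix any $t \in \That$ and work on $E$. Then $t \in \predT$ by the first inclusion, so the validity hypothesis gives $L(t) \leq \hat L_n(t) + \hatqloc{\predT}/\sqrt{n}$, and monotonicity of the quantile upgrades this to $L(t) \leq \hat L_n(t) + \hatqloc{\predThat}/\sqrt{n}$, i.e., $L(t) \in \Btloc{\predThat}$. Since $t \in \That$ was arbitrary, $E$ is contained in the event whose probability appears on the left-hand side, and the inequality follows. The one subtlety to track carefully is the quantile monotonicity across a data-dependent set; because the inclusion $\predT \subseteq \predThat$ holds on $E$ pointwise in $\omega$, the inequality between the (data-dependent) quantiles is also pointwise on $E$, so no further measurability argument is needed.
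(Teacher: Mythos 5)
Your proposal is correct and follows essentially the same route as the paper: establish the sandwich $\That \subseteq \predT \subseteq \predThat$ on the global-band event (using the upper half of the band for the first inclusion and the lower half for the second), then invoke monotonicity of the local band $\Btloc{\cdot}$ in its set argument to transfer validity from $\predT$ to $\predThat$. Your extra remark unpacking why $\hatqloc{\predT} \leq \hatqloc{\predThat}$ holds pointwise via the ordering of $D_{n,\star}^-$ is a correct elaboration of the one-line set-monotonicity fact the paper states at the outset.
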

The two events in the right-hand side have been bracketed for clarity. 
Note that these events depend on the
sample size $n$, but this dependence has been suppressed.
\begin{proof}
Because $\Btloc{\cT_1} \subset \Btloc{\cT_2}$ when $\cT_1 \subset \cT_2$, it is sufficient to show the inclusion 
\[
   \That \subset \predT \subset \predThat
\]
on the event $L(t) \in \Btglob\, \textup{for all } t \in [0,1]$. 

On this event, the first inclusion holds because 
$L(t) \leq  \hat L_n(t) + \hatqglob / \sqrt{n}$ for all $t \in [0,1]$, 
giving the chain of implications
\[
\hat L_n(t) \leq r \Rightarrow \hat L_n(t) + \hatqglob / \sqrt{n} 
\leq r +  \hatqglob / \sqrt{n} 
\Rightarrow L(t) \leq r +  \hatqglob / \sqrt{n}.
\]
The second inclusion holds because 
$\hat L_n(t) \leq  \hat L_n(t) + \hatqglob / \sqrt{n}$ for all $t \in [0,1]$,
giving the chain of implications 
\[
L(t) \leq r + \hatqglob /\sqrt{n} \Rightarrow L(t) + \hatqglob / \sqrt{n} 
\leq r +  2\hatqglob / \sqrt{n} 
\Rightarrow \hat L_n(t) \leq r +  2\hatqglob / \sqrt{n}.\qedhere
\]
\end{proof}

\begin{proof}[Proof of Theorem \ref{thm:locsim_bootstrap_risk}]

We wish to show that the event
\[
\left\{L(t) \leq \hat L_n(t) + \frac{\hatqloc{\predThat}}{\sqrt{n}}, \quad \mbox{simultaneously for all}~t \in \widehat{\mathcal{T}}_{r} \right\}
\]
occurs with probability at least $1 - (\delta + o(1))$. Observe that we may
re-express this event as the event
\[
\left\{L(t) \in \Btloc{\predThat},\quad \textup{for all } t \in \That\right\}.
\]

Applying Lemma \ref{lem:keylem}, we can lower bound the probability of this event
in terms of two events with easier-to-handle sets $\cT$:
\begin{equation}
\begin{split}
    \PP(L(t) \in \Btloc{\predThat}\, \textup{for all } t \in \That)  \geq 
    \PP\Big( \{L(t) &\in \Btloc{\predT}\, \textup{for all } t \in \predT\} 
    \\ &\textsc{ and }  \{L(t) \in \Btglob\, \textup{for all } t \in [0,1]\}\Big)
\end{split}
\end{equation}

Rewrite the right-hand side as
\begin{equation}
\begin{split}
    \PP\Big( \{\sqrt{n}(L(t) - \hat L_n(t)) \leq \hatqloc{\predT} \, \textup{for all } t \in \predT\} 
    \textsc{ and }  \{\sqrt{n} |L(t) - \hat L_n(t)| \leq \hatqglob \, \textup{for all } t \in [0,1]\}\Big),
\end{split}
\end{equation}
which can be further re-expressed as
\begin{equation}
    \PP\Big( \{\sup_{t \in [0,1]} \big(\sqrt{n}(L(t) - \hat L_n(t) \big) \cdot  1\{L(t) \leq r + \frac{\hatqglob}{\sqrt{n}}\} \leq \hatqloc{\predT} \}
    \textsc{ and }  \{\sup_{t \in [0,1]} \sqrt{n} |L(t) - \hat L_n(t)| \leq \hatqglob \}\Big),
\end{equation}
and by the union bound, this is greater than 
\begin{equation}
\begin{split}
    1 - \PP\Big( \sup_{t \in [0,1]} \big(\sqrt{n}(L(t) - \hat L_n(t) \big) \cdot {1}\{L(t) \leq r + \frac{\hatqglob}{\sqrt{n}}\} > \hatqloc{\predT} \Big) 
    - \PP\Big( \sup_{t \in [0,1]} \sqrt{n} |L(t) - \hat L_n(t)| > \hatqglob\Big).
\end{split}
\end{equation}
Now applying the functional
central limit theorem (Theorem \ref{thm:fclt}), Lemma \ref{lem:bootqtl}, and Slutsky's
lemma, this equals
\[
    1 - \PP\Big( \sup_{t \in [0,1]} \bG(t) \cdot {1}\{t \in \trueT\} \leq \qloc{\trueT}\Big) \\
    - \PP\Big( \sup_{t \in [0,1]} |\bG(t)|  \leq \qglob \Big) - o(1),
\]
where we have assumed each $\hat q$ and $q$ represent exact 
quantiles as in Lemma \ref{lem:bootqtl}; this is fine to assume because this quantity
lower bounds the case where they are not exact quantiles.

Importantly, $\trueT$ is not a random set, 
but is deterministic, so because $\qloc{\trueT}$ and
$\qglob$ are quantiles, definitionally we have that the previous display is lower
bounded by
\[
1 - \delta_\mathrm{loc} - \delta_\mathrm{glob} - o(1) = 1  - (\delta + o(1)),
\]
as claimed.

\end{proof}

\subsection{Proof of Theorem \ref{thm:dkw-os}}\label{app:bdkwproof}

To prove this theorem we need two lemmas.

\begin{lemma}\label{lem:disperse}
Let $(Y(t))_{t \in \RR}$ be a real-valued stochastic process with bounded sample paths. Let $\varphi: \RR \to \RR$ be a convex, non-decreasing function. Let $V = \sup_t \EE[Y(t) \mid Z]$ and $V^* = \sup_t Y(t)$. Then
\[ 
    \EE[\varphi(V)] \leq \EE[\varphi(V^*)].
\]
\end{lemma}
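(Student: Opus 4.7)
The plan is to chain two elementary inequalities: a conditional version of the trivial bound $\EE[Y(t)\mid Z] \leq \EE[\sup_s Y(s)\mid Z]$ to control $V$, then conditional Jensen's inequality to convert this into a bound on $\EE[\varphi(V)]$. The hypothesis that $\varphi$ is both convex and non-decreasing is exactly what makes this two-step argument go through: monotonicity lets us pass the pointwise bound through $\varphi$, and convexity lets us pull the conditional expectation outside.

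First I would establish the almost-sure inequality $V \leq \EE[V^* \mid Z]$. For each fixed $t$, the (conditional) monotonicity of expectation gives $\EE[Y(t)\mid Z] \leq \EE[V^* \mid Z]$ a.s., where $V^* = \sup_s Y(s)$ is a valid random variable because sample paths are bounded. Taking supremum over $t$ on the left-hand side yields $V \leq \EE[V^* \mid Z]$ almost surely. Then, since $\varphi$ is non-decreasing, $\varphi(V) \leq \varphi(\EE[V^* \mid Z])$, and by conditional Jensen's inequality applied to the convex function $\varphi$, $\varphi(\EE[V^* \mid Z]) \leq \EE[\varphi(V^*) \mid Z]$. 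Taking an outer expectation and using the tower property gives $\EE[\varphi(V)] \leq \EE[\varphi(V^*)]$.

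The main obstacle is purely measure-theoretic: for an uncountable index set, the supremum $V = \sup_t \EE[Y(t)\mid Z]$ is a.s.\ well-defined as a supremum over $t$ only modulo a $t$-dependent null set, so ``taking sup'' in step one is not automatic. In the intended application (the proof of Theorem~\ref{thm:dkw-os}), the process will be monotone (hence right-continuous up to countably many jumps), so one can restrict to a countable dense subset $\cT_0 \subset \RR$ on which $\sup_{t \in \cT_0} Y(t) = V^*$ holds identically and on which the pointwise bound can be combined over a single null set. I would incorporate this reduction explicitly, noting that the boundedness of sample paths together with right-continuity (or more generally separability of the process) lets us replace $\sup_t$ with $\sup_{t \in \cT_0}$ throughout. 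Once this reduction is made, the rest of the argument is essentially a one-line appeal to conditional Jensen.
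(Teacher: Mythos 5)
Your proof is correct, and it reaches the conclusion from the same three ingredients as the paper---conditional Jensen, monotonicity of $\varphi$, and the interchange $\sup_t \EE[\,\cdot\mid Z] \leq \EE[\sup_t \cdot \mid Z]$---but assembles them in a different order. The paper applies Jensen pointwise in $t$ to get $\varphi(\EE[Y(t)\mid Z]) \leq \EE[\varphi(Y(t))\mid Z]$, then interchanges the supremum with the conditional expectation of the \emph{transformed} process, and finally needs the identity $\varphi(\sup_t f(t)) = \sup_t \varphi(f(t))$, for which it invokes the continuity of the convex function $\varphi$. You instead perform the interchange first, on the raw process, obtaining $V \leq \EE[V^*\mid Z]$ almost surely, and then apply monotonicity of $\varphi$ followed by a single conditional Jensen step to the one random variable $V^*$. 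This is marginally leaner: Jensen is used once rather than once per index, and the $\varphi\circ\sup = \sup\circ\,\varphi$ identity (and hence the appeal to continuity of convex functions) is not needed at all. You are also more careful than the paper on the one genuine technical point, namely that the pointwise inequalities $\EE[Y(t)\mid Z] \leq \EE[V^*\mid Z]$ each hold only off a $t$-dependent null set and that $V^*$ must be measurable; your reduction to a countable dense index set via separability (automatic for the monotone, right-continuous processes arising in the application to Theorem~\ref{thm:dkw-os}) is exactly the right fix, and the paper's own proof silently assumes the same thing when it writes $\EE[\sup_t \EE[\varphi(Y(t))\mid Z]]$.
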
{}

\begin{proof}[Proof of Lemma \ref{lem:disperse}]
First, note that for any function $f: \RR \to \RR$ with $\sup_t f(t) < \infty$, then if $\varphi: \RR \to \RR$ is monotone increasing and continuous, 
\[
    \varphi(\sup_t f(t)) = \sup_t \varphi(f(t)).
\]
The direction $\varphi(\sup_t f(t)) \geq \sup_t \varphi(f(t))$ follows by using monotonicity. The direction $\varphi(\sup_t f(t)) \leq \sup_t \varphi(f(t))$ follows from continuity: $\varphi(\tilde f) \leq \sup_t \varphi(f(t))$ for any $\tilde f$ in the range of $f$, so take an increasing sequence $\tilde f_1, \tilde f_2, \dots$ converging to $\sup_t f(t)$. 

We get the chain of inequalities, noting that $\varphi$ must be continuous due to convexity:
\begin{gather*}
    \EE[\varphi( \sup_t \EE[Y(t) \mid Z] ) ] = \EE[ \sup_t \varphi (\EE[Y(t) \mid Z] ) ] \leq \EE[ \sup_t   \EE[\varphi (Y(t)) \mid Z]  ] \\ 
    \leq \EE[ \EE[ \sup_t\varphi (Y(t)) \mid Z]  ]
    = \EE[ \sup_t\varphi (Y(t)) ] = \EE[ \varphi ( \sup_tY(t)) ],
\end{gather*}
where the first inequality is Jensen's.
\end{proof}

The next lemma concerns what we call the ``Bentkus transform,'' defined in \cite{bentkusDominationTailProbabilities2006a}. For any function $S: \RR \to \RR$, define the log-concave hull $S^\circ$ as the smallest function such that $S \leq S^\circ$ and $x \to -\log S^\circ(x)$ is a convex function. If $S$ is a survival function, define its Bentkus transform as
  \[
    \mathcal{B}[S](x) = \inf_{r < x} \frac{\EE[(X - r)_+]}{(x - r)_+} = \inf_{r < x} \frac{1}{(x - r)_+} \int_r^\infty S(y) \, dy, 
  \]
where $X \sim 1 - S$. 

\begin{lemma}\label{lem:kemperman}
For a survival function $S$, for all $x \in \RR$,
\[
    S(x) \overset{\text{(i)}}{\leq} \mathcal{B}[S](x) \overset{\text{(ii)}}{\leq} e S^\circ(x).
\]
\end{lemma}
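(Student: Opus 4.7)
The plan is to prove the two inequalities separately; (i) is a short Markov-style argument, while (ii) is the substantive step and benefits from first reducing to the log-concave case.

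For inequality (i), I would fix $r < x$ and invoke the pointwise bound $(X - r)_+ \geq (X - r)\,\mathbbm{1}\{X \geq x\} \geq (x-r)\,\mathbbm{1}\{X \geq x\}$, which on taking expectations gives $\EE[(X-r)_+] \geq (x-r)\,S(x)$; dividing and taking the infimum over $r < x$ yields $S(x) \leq \mathcal{B}[S](x)$. (One should be mildly careful about whether $S(x)$ is $\PP(X \geq x)$ or $\PP(X > x)$, but the inequality goes through for either convention.)

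For inequality (ii), I would first observe the monotonicity of the Bentkus transform: since $S \leq S^\circ$ pointwise, we have $\mathcal{B}[S](x) \leq \mathcal{B}[S^\circ](x)$, so it suffices to prove that for any log-concave survival function $T$, $\mathcal{B}[T](x) \leq e\,T(x)$. Writing $T = e^{-g}$ with $g \colon \RR \to [0, \infty]$ convex and non-decreasing, I would use the subgradient inequality $g(y) \geq g(x) + g'(x)(y - x)$ for all $y$, where $g'(x) \geq 0$ is any subgradient, to obtain $T(y) \leq T(x)\,e^{-g'(x)(y-x)}$. Integrating from $r$ to $\infty$ gives $\int_r^\infty T(y)\,dy \leq T(x)\,e^{g'(x)(x-r)}/g'(x)$ whenever $g'(x) > 0$. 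Setting $u = g'(x)(x-r)$, the quantity $\frac{1}{x-r}\int_r^\infty T(y)\,dy$ is bounded by $T(x)\,e^u/u$, which is minimized at $u = 1$ with value $e\,T(x)$; the optimal choice is $r = x - 1/g'(x)$.

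Two edge cases must be dispatched at the end. If $g'(x) = 0$ then $T$ is locally equal to $1$ at $x$, whence $T(x) = 1$ and the bound $\mathcal{B}[T](x) \leq 1 \leq e\,T(x)$ is trivial (one can also argue by taking $r \to -\infty$ along a suitable normalization). If $g$ fails to be differentiable at $x$, any element of the subdifferential $\partial g(x)$ serves in place of $g'(x)$, and the above reasoning is unchanged.

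The main obstacle I anticipate is the clean reduction to the log-concave case in step (ii). In particular, one has to verify that $S^\circ$ is itself a (non-negative, non-increasing) survival-function-like object — so that its tail integral is meaningful — and that the tangent-line bound is available at $x$ with a finite positive subgradient in the generic case. Once those technicalities are settled, the computation collapses to the one-line optimization $\inf_{u>0} e^u/u = e$, which yields the constant.
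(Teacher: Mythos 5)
Your proof is correct. For inequality (i) you use exactly the argument the paper intends: Markov's inequality applied to $(X-r)_+$ via the pointwise bound $(X-r)_+ \geq (x-r)\,\mathbbm{1}\{X \geq x\}$, followed by the infimum over $r < x$. For inequality (ii) the paper does not actually give a proof; it cites Lemma 4.2 of Bentkus (2004) (equivalently Lemma 1.1 of his 2006 paper). What you supply --- the monotonicity of the transform under $S \leq S^\circ$, the tangent-line bound $T(y) \leq T(x)e^{-c(y-x)}$ for a subgradient $c \in \partial(-\log T)(x)$, and the optimization $\inf_{u>0} e^u/u = e$ at $u = 1$ --- is essentially Bentkus's own argument, so you have filled in the cited step rather than found a new route. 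Two minor points worth tightening: the whole statement implicitly requires $\EE[X_+] < \infty$ so that the tail integrals are finite (automatic in the paper's application, where $X$ is a supremum of a bounded process); and in the edge case $c = 0$, the correct justification is that $-\log S^\circ$ attains its minimum at $x$, so $S^\circ(x) = \sup S^\circ$, which equals $1$ because $1 \geq S^\circ \geq S$ and $\sup S = 1$ for a survival function --- ``locally equal to $1$'' is slightly imprecise but the conclusion $S^\circ(x)=1$ is right, and then $\mathcal{B}[S^\circ](x) \leq 1$ follows from your $r \to -\infty$ remark.
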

\begin{proof}

Inequality (i) can be shown by Markov's inequality, applied to the random variable $(X - r)_+$. Inequality (ii) is proved in more generality as Lemma 4.2 by \cite{bentkusHoeffdingInequalities2004}, or alternately Lemma 1.1 \cite{bentkusDominationTailProbabilities2006a}. 
\end{proof}

\cite{bentkusHoeffdingInequalities2004} attributes this lemma to \cite{pinelisFractionalSumsIntegrals1999b}, and a special case to Kemperman, citing Ch. 25 of \cite{shorackEmpiricalProcessesApplications2009}. It seems to be well-suited for proving extremal results for random variables that are the ``least averaged.'' 

For instance, it was used to prove Theorem 1.2 of \cite{bentkusHoeffdingInequalities2004}, which showed that binary random variables are, in some sense, more stochastic than variables bounded in $[0,1]$. Meanwhile, Ch. 25 of \cite{shorackEmpiricalProcessesApplications2009} demonstrates a DKW inequality for independent but not identically distributed random variables by showing that the iid case is the most stochastic. Evidently, these results are related to ours. 

\begin{proof}[Proof of Theorem \ref{thm:dkw-os}]

Let $h(t, Z) := f_t(Z)$, and we extend its domain to $t \in \RR$ by taking $h(t, Z) = 1$ whenever $t > 1$, and $h(t, Z) = 0$ whenever $t < 0$; then
\[
    \sup_{t \in [0,1]} H_n(t, Z) - H(t) = \sup_{t \in \RR} H_n(t, Z) - H(t),
\]
so from now on we take suprema over $\RR$, and show
\[
    \PP\Big(\sup_{t \in \RR} + \big(H_n(t, Z) - \EE H_n(t, Z) \big) \geq x\Big) \leq e \exp(-2x^2),
\]
which implies the result with the $+$ sign (the $-$ sign is exactly similar). 

The right-continuity assumption implies that, for each $Z$, $h(t,Z)$ is a CDF of a random variable supported on $[0,1]$; that is, it is non-decreasing, right-continuous, and $h(1, Z) = 1$. Then, conditionally on each $Z_i$, let $T_i$ be a random variable drawn according to the CDF $h(\cdot, Z_i)$, and define
\[
Y(t) = \sqrt{n}\left( \frac{1}{n} \sum_{i = 1}^n 1\{T_i \leq t\} - H(t) \right).
\]
Letting $Z = (Z_1, \dots, Z_n)$, it follows that
\[
\EE[Y(t) \mid Z] = \sqrt{n}\left( \frac{1}{n} \sum_{i = 1}^n h(t, Z_i) - H(t) \right) = \sqrt{n} (H_n(t) - H(t)).
\]
Let  $V = \sup_t \EE[Y(t) \mid Z]$ and $V^* = \sup_t Y(t)$. Set $\varphi_r: x \mapsto (x - r)_+$, an increasing convex function, and applying Lemma \ref{lem:disperse}, we have for any $x > r$
\[
    \frac{\EE[\varphi_r(V)]}{\varphi_r(x)} \leq \frac{\EE[\varphi_r(V^*)]}{\varphi_r(x)}.
\]
Let $S$ denote the survival function of $V$ and $S^*$ of $V^*$. Then taking infimums in $r$ on both sides, we can write an inequality between two Bentkus transforms:
\[
    \mathcal{B}[S](x) \leq \mathcal{B}[S^*](x).
\]
Applying Lemma \ref{lem:kemperman} gives
\begin{equation}\label{eq:icx_result}
    S(x) \leq e[{S^*}]^\circ(x),
\end{equation}
and finally, observe that the (one-sided) DKW inequality \cite{massartTightConstantDvoretzkyKieferWolfowitz1990} implies $S^*(x) \leq \exp(-2x^2)$ . Since the right-hand side is log-concave, in fact $[{S^*}]^\circ(x) \leq \exp(-2x^2)$. So ultimately
\[
    S(x) = \PP\Big(\sup_{t \in \RR} + \big(H_n(t, Z) - \EE H_n(t, Z) \Big) \leq e \exp(-2x^2),
\]
as claimed.
\end{proof}

Inspecting the argument leading up to equation \eqref{eq:icx_result}, we can extract a fact that may be of independent interest; a weaker version was 
also stated by \cite{panchenko2003symmetrization}. It concerns the \textit{increasing convex ordering} of random variables (see, e.g., \cite{shakedUnivariateMonotoneConvex2007}, Section 4.A).  

We write $A \leq_{\textup{icx}} B$, read as ``$A$ is less than $B$ in the increasing convex order,'' if $\EE \varphi(A) \leq \EE \varphi(B)$ for all non-decreasing, convex $\varphi$. Let $S_A, S_B$ denote their survival functions.
\begin{cor}
Whenever $A \leq_{\textup{icx}} B$, then $S_A(x) \leq e S_B^\circ(x).$
\end{cor}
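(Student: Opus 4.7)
The plan is to observe that this corollary is essentially an abstraction of the argument already performed inside the proof of Theorem~\ref{thm:dkw-os}. There, the roles of $A$ and $B$ were played by the random variables $V = \sup_t \mathbb{E}[Y(t) \mid Z]$ and $V^* = \sup_t Y(t)$ respectively, and the relation $V \leq_{\textup{icx}} V^*$ was deduced from Lemma~\ref{lem:disperse}. Nothing in the subsequent chain of inequalities used any property of $V, V^*$ beyond the icx comparison, so the argument goes through verbatim for arbitrary $A \leq_{\textup{icx}} B$.

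Concretely, I would proceed in three short steps. First, for every $r < x$, the function $\varphi_r(y) = (y - r)_+$ is non-decreasing and convex, so by the definition of the icx order we get $\mathbb{E}[(A - r)_+] \leq \mathbb{E}[(B - r)_+]$. Second, dividing both sides by $(x - r)_+ > 0$ and taking the infimum over $r < x$ yields, directly from the definition of the Bentkus transform,
\[
\mathcal{B}[S_A](x) \;\leq\; \mathcal{B}[S_B](x).
\]
Third, I would chain this with the two halves of Lemma~\ref{lem:kemperman}: the inequality $S_A(x) \leq \mathcal{B}[S_A](x)$ on the left, and the inequality $\mathcal{B}[S_B](x) \leq e\, S_B^\circ(x)$ on the right. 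Combined, this gives $S_A(x) \leq e\, S_B^\circ(x)$, which is the desired bound.

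There is no real obstacle here, since both ingredients (the definition of the Bentkus transform and the two bounds in Lemma~\ref{lem:kemperman}) are already in hand. The only conceptual remark worth making is that one must use the representation of $\mathcal{B}[S]$ in terms of $\mathbb{E}[(X - r)_+]$ (rather than $\int_r^\infty S$), because the icx order is formulated in terms of expectations of convex test functions; but the two representations coincide by a standard identity, so this is a cosmetic point rather than a genuine difficulty.
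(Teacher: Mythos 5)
Your proposal is correct and is exactly the argument the paper intends: it abstracts the chain from the icx comparison to equation \eqref{eq:icx_result} in the proof of Theorem \ref{thm:dkw-os}, using the test functions $\varphi_r(y) = (y-r)_+$ to get $\mathcal{B}[S_A](x) \leq \mathcal{B}[S_B](x)$ and then invoking both halves of Lemma \ref{lem:kemperman}. No gaps; your remark about the equivalence of the two representations of the Bentkus transform is the only subtlety and you handle it correctly.
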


\end{document}